\let\FULLVERSION\relax
\def\xmark{\ding{55}} 
\crefname{algocf}{Algorithm}{Algorithms}
\crefname{algorithm}{Algorithm}{Algorithms}
\crefname{algocf}{Algorithm}{Algorithms}
\crefname{algocfline}{Line}{Lines}
\def\refrel#1#2#3{\stackrel{\text{#2 \ref{#1}}}{#3}}
\def\eqRefRel#1#2#3{\stackrel{\text{#2(\ref{#1})}}{#3}}
\def\itemRef#1{\lipicsLabel{(\ref{#1})}}
\def\reasonrel#1#2{\stackrel{\text{#1}}{#2}}
\def\myparagraph#1{\smallskip\noindent\textbf{#1}}
\let\paragraph=\myparagraph
\def\REPT{REP'\xspace}
\def\REP{REP\xspace}
\def\Exp{\mathrm{Exp}}
\def\Bin{\mathrm{Bin}}
\def\e{\mathrm{e}}
\def\cp{c^{△}}
\def\cpk{c^{△}_{k,1}}
\def\cpkl{c^{△}_{k,ℓ}}
\def\ck{c^{*}_{k,1}}
\def\c{c^*}
\def\ckl{c^{*}_{k,ℓ}}
\def\moves{\mathrm{moves}}
\def\peel{\mathrm{peel}}
\newcommand\lipicsLabel[1]{\textcolor{darkgray}{\sffamily\upshape\bfseries\mathversion{bold}#1}}
\newif\ifanonymous\anonymousfalse
\newif\iffullversion
\title{Insertion Time of Random Walk Cuckoo Hashing\texorpdfstring{\\}{}below the Peeling Threshold\iffullversion\else\ (extended abstract)\fi}
\titlerunning{Insertion Time of Random Walk Cuckoo Hashing below the Peeling Threshold}
  \author{Undisclosed Author}{Undisclosed University}{}{}{}
  \authorrunning{Undisclosed Author}
  \author{Stefan Walzer}{University of Cologne}{walzer@cs.uni-koeln.de}{}{DFG grant WA 5025/1-1.}
  \authorrunning{S.\ Walzer}
\keywords{Cuckoo Hashing, Random Walk, Random Hypergraph, Peeling, Cores.}
\begin{document}
\maketitle


\begin{abstract}
    Most hash tables have an insertion time of $𝒪(1)$, often qualified as “expected” and/or “amortised”. While insertions into cuckoo hash tables indeed seem to take $𝒪(1)$ expected time in practice, only polylogarithmic guarantees are proven in all but the simplest of practically relevant cases. Given the widespread use of cuckoo hashing to implement compact dictionaries and Bloom filter alternatives, closing this gap is an important open problem for theoreticians.
    
    In this paper, we show that random walk insertions into cuckoo hash tables take $𝒪(1)$ expected amortised time when any number $k ≥ 3$ of hash functions is used and the load factor is below the corresponding \emph{peeling threshold} (e.g.\ $≈0.81$ for $k = 3$). To our knowledge, this is the first meaningful guarantee for constant time insertion for cuckoo hashing that works for $k ∈ \{3,…,9\}$.
    
    In addition to being useful in its own right, we hope that our key-centred analysis method can be a stepping stone on the path to the true end goal: $𝒪(1)$ time insertions for all load factors below the \emph{load threshold} (e.g. $≈0.91$ for $k = 3$).
\end{abstract}

\newpage




\section{Introduction}

\paragraph{Cuckoo Hashing Basics.} Cuckoo hashing is an elegant approach for constructing compact and efficient dictionaries that has spawned both a rich landscape of theoretical results and popular practical applications. Briefly, each key $x$ in a set of $m$ keys is assigned $k$ positions $h₁(x),…,h_k(x)$ in an array of $n ≥ m$ buckets via hash functions $h₁,…,h_k$. Each bucket can hold at most $ℓ$ keys and the challenge is to choose for each key one of its assigned buckets while respecting bucket capacities. We follow previous literature in assuming $k ≥ 2$ and $ℓ ≥ 1$ to be constants and $h₁,…,h_k$ to be uniformly random functions.

Since the term cuckoo hashing was coined for $(k,ℓ) = (2,1)$ \cite{PR:Cuckoo:2004} and then generalised to $k ≥ 3$ \cite{FPSS:Space_Efficient:2005} and $ℓ ≥ 2$ \cite{DW07:Balanced:2007}, a major focus of theory papers has been to determine the load thresholds $\ckl$, which are constants such that for a load factor $\frac{m}{n} < \ckl-ε$ a placement of all keys exists with high probability (whp, defined in this paper as probability $1-n^{-Ω(1)}$) and for $\frac{m}{n} > \ckl+ε$ a placement does not exist whp. This project has since been completed \cite{FR:The_k-orientability:2007,CSW:The_Random:2007,DGMMPR:Tight:2010,FP:Sharp:2012,FKP:The_Multiple:2011,L:A_New_Approach:2012} and we reproduce some thresholds in \cref{tab:thresholds} for reference. Further research pursued cuckoo hashing variants with reduced failure probability \cite{KMW:More_Robust:2009}, improved load thresholds \cite{DW07:Balanced:2007,LP:3.5-Way:2009,Walzer:OverlappingBlocks:2018} or weaker randomness requirements for $h₁,…,h_k$ \cite{MT:Peeling_Arguments:2012,ADW:Explicit:2013,Leconte:Cuckoo:2013}. Moreover, cuckoo filters \cite{FAK:CuckooFilterBetter:2013,FAKM:CuckooFilterReallyBetter:2014,Eppstein:CuckooFilter:2016}, which are Bloom filter alternatives based on cuckoo hashing, are now widely popular in the data base community.

\begin{table}[tbh]
  \def\oneo{1\phantom{.000}}
  \begin{tabular}{cccccccc}
      \toprule 
      $ℓ$\textbackslash\raisebox{1.5pt}{$k$} & 2 & 3 & 4 & 5 & 6 & 7\\
      \midrule
      1 & \phantom{0.0}–\phantom{0}/0.500 & 0.818/0.918 & 0.772/0.977 & 0.702/0.992 & 0.637/0.997 & 0.582/0.999\\
      2 & 0.838/0.897 & 0.776/0.988 & 0.667/0.998 & 0.579/\oneo & 0.511/\oneo & 0.457/\oneo\\
      3 & 0.858/0.959 & 0.725/0.997 & 0.604/\oneo & 0.515/\oneo & 0.450/\oneo & 0.399/\oneo\\
      4 & 0.850/0.980 & 0.687/0.999 & 0.562/\oneo & 0.476/\oneo & 0.412/\oneo & 0.364/\oneo\\
      5 & 0.837/0.990 & 0.658/\oneo & 0.533/\oneo & 0.448/\oneo & 0.387/\oneo & 0.341/\oneo\\
      6 & 0.823/0.994 & 0.635/\oneo & 0.511/\oneo & 0.427/\oneo & 0.368/\oneo & 0.323/\oneo\\
      \bottomrule
  \end{tabular}
  \caption{For each $2 ≤ k ≤ 7$ and $1 ≤ ℓ ≤ 6$ a cell shows $\cpkl/ℓ$ (left) and $\ckl/ℓ$ (right), rounded to three decimal places. The thresholds are divided by $ℓ$ to reflect the corresponding memory efficiency (populated space over allocated space).}
  \label{tab:thresholds}
\end{table}

\paragraph{Cuckoo Insertions.} An important concern in all variants of cuckoo hashing is how to insert a new key $x$ into an existing data structure. If all buckets $h₁(x),…,h_k(x)$ are full, then one key that is currently placed in those buckets has to be moved out of the way into one of its alternative buckets, which might require additional relocations of keys. Two natural strategies for organising insertions have been proposed \cite{FPSS:Space_Efficient:2005}. Breadth-first search (BFS) insertion systematically pursues all possibilities for relocating keys in parallel. Random walk (RW) insertion starts by optimistically placing $x$ into bucket $h_i(x)$ for a uniformly random $i ∈ [k]$ (where in this paper $[a] := \{1,…,a\}$ for $a ∈ ℕ$), without first considering any of the $k-1$ other buckets. If $h_i(x)$ was already full, then a random key is evicted from $h_i(x)$ and is itself placed into one of its $k-1$ alternative buckets. This chain of evictions continues until a bucket with leftover space is reached (see \cref{algo:rwi} for the case with $ℓ = 1$).

\begin{figure}[htb]
  \def\iold{i_{\mathrm{old}}}
  \begin{algorithm}[H]
    \algo{\normalfont RW($x$)}{
      $\iold ← ⊥$\;
      \Repeat{$x = ⊥$}{
        pick random $i ∈ \{h₁(x),…,h_k(x)\} \setminus \{\iold\}$\;
        swap($x,B[i]$)\;
        $\iold ← i$\;
      }
    }
  \end{algorithm}
  \caption{The random walk insertion algorithm for $ℓ = 1$. The array $B$ of buckets is initialised with $⊥$.}
  \label{algo:rwi}
\end{figure}

Experiments suggest that, regardless of $k$ and $ℓ$, and for any load factor $\frac{m}{n} < \ckl - ε$ where insertions still succeed whp, the expected insertion time is independent of $n$, hence “$𝒪(1)$” (neglecting dependence on the constants $k,ℓ,ε$), for both RW and BFS. Despite some partial success (see below), this claim has not been proven for any $k$ and $ℓ$, neither for RW nor for BFS, with the exception of $(k,ℓ) = (2,1)$, which behaves very differently compared to other cases. A theoretical explanation for the good performance of cuckoo hashing in practical applications is therefore seriously lacking in this aspect. While this paper does not solve the problem, it puts a new kind of dent into it.

\paragraph{Contribution.} Like most previous papers on cuckoo hashing insertions (an exception \cite{FP:RandomWalk-Buckets:2018} is mentioned below) we focus on the case $ℓ = 1$. Our analysis shows that RW insertions take $𝒪(1)$ expected amortised time for all $k ≥ 3$, but it only works for $\frac{m}{n} < \cpk - ε$ where $\cpkl < \ckl$ is known as the \emph{peeling threshold} or threshold for the occurrence of an $(ℓ+1)$-core in a random $k$-uniform hypergraph \cite{PSW:SuddenCore:96,Molloy05:Cores-in-random-hypergraphs,C:Cores:2004,Luczak:A-simple-solution}, see \cref{tab:thresholds}. Our analysis extends to a setting where the $m$ insertions are carried out by $m$ threads in parallel, each executing RW. We consider the worst case, where a (possibly adversarial) scheduler arbitrarily assigns the available computation time to threads that have not yet terminated. We only assume that the scheduler is oblivious of future random choices and that swaps are atomic, i.e.\ when several threads perform swaps concurrently, the effect is the same as executing these swaps in \emph{some} sequential order (see e.g.\ \cite[Sec. 2.4]{DDMS:Basic-Toolbox:2019} for common parallel programming models).
\begin{theorem}
    \label{thm:main}
    Let $k ∈ ℕ$ with $k ≥ 3$ and $ε > 0$ be constants and $n,m ∈ ℕ$ with $\frac{m}{n} < \cpk - ε$.
    \begin{enumerate}[(i)]
      • Conditioned on a high probability event, sequentially inserting $m$ keys into a cuckoo hash table with $n$ buckets of size $1$ using RW takes $𝒪(n)$ steps in expectation.
      • The same applies if the $m$ insertions are started in parallel with arbitrary scheduling, only assuming that swaps are atomic. In other words, the combined work is $𝒪(n)$.
    \end{enumerate}
\end{theorem}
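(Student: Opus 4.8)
The plan is to phrase the process through the random $k$-uniform hypergraph $H$ whose vertices are the $n$ buckets and whose $m$ edges are the keys, edge $x$ joining $h_1(x),\dots,h_k(x)$. Since $\frac mn<\cpk-\varepsilon$, the classical results on cores of random hypergraphs cited in the introduction give that whp $H$ is \emph{peelable}: iterated removal of a vertex of degree at most $1$ together with its incident edge deletes all of $H$; equivalently its $2$-core is empty. I would condition on this event together with a few further whp structural facts supplied by the subcritical branching-process description of leaf removal — this is the ``high probability event'' of the statement. Running leaf removal in synchronous rounds $0,1,2,\dots$, let $\ell(x)$ be the round in which key $x$ is deleted and $b(x)$ the bucket responsible for its deletion; standard peeling makes $b$ injective (peeling orders are valid orientations), and it enjoys \emph{interference monotonicity}: if $x\ne y$ are both incident to $b(x)$, then $\ell(y)<\ell(x)$. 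Below the peeling threshold the recursion for the expected fraction of keys surviving round $t$ contracts to $0$ geometrically fast, so the layer sizes $|\{x:\ell(x)=t\}|$ decay geometrically in $t$; fix a constant $A=A(k,\varepsilon)>1$ strictly below the contraction rate, so that $\sum_x A^{\ell(x)}=\mathcal O(n)$.

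Call a configuration \emph{canonical} if every placed key sits in its home bucket $b(x)$. In a canonical configuration a bucket is full precisely when some placed key has it as home, so the canonical configuration on any key set is a valid placement. The central observation is that a RW insertion started from a canonical configuration evicts keys of \emph{strictly increasing} layer: if the walk moves a key $y$ out of a bucket it occupies — either the new key from its first random bucket, or an evicted key from its home — into a bucket $i'$ holding $z$, then $i'$ lies in $h(y)$ and, by canonicity, $i'=b(z)$, and interference monotonicity applied to $z$ forces $\ell(z)>\ell(y)$. Hence such a walk has length at most $1+\max_x\ell(x)$, and using the layer tail bound it has length $\mathcal O(1)$ in expectation. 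It does not, however, end canonically unless the terminal empty bucket happens to be the last evicted key's home, so configurations drift, and the amortised analysis must cope with this drift.

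To do so I would use the potential $\Phi(C)=\sum_{x\text{ placed}}A^{\ell(x)}\mathbf 1[\mathrm{loc}_C(x)\ne b(x)]$. A single swap moves one key $z$ between two distinct buckets $i,i'$, costing $1$ unit of work and changing $\Phi$ by $A^{\ell(z)}\big(\mathbf 1[i'\ne b(z)]-\mathbf 1[i\ne b(z)]\big)$; so a swap that returns a displaced key home has nonpositive work-plus-$\Phi$ contribution (as $A\ge 1$), a swap between two non-home buckets contributes exactly $1$, and only a swap that evicts a key \emph{out of} its home is ``expensive'', contributing $1+A^{\ell(z)}$. If $T_t$ and $\Delta_t$ denote the work and the change of $\Phi$ caused by the $t$-th insertion, the target is the conditional drift bound $\mathbb E[T_t+\Delta_t\mid\mathcal F_{t-1}]=\mathcal O(1)$: granting it, telescoping over $t=1,\dots,m$ and using $\Phi\ge 0$, $\Phi(C_0)=0$ gives $\mathbb E[\text{total work}]\le\mathcal O(m)=\mathcal O(n)$. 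I expect this drift bound to be the main obstacle. For a canonical starting configuration it reduces, via the strictly-increasing-layers observation, to $\mathbb E[A^{\ell_{\text{deepest visited}}}]=\mathcal O(1)$, which follows from the geometric layer decay and a bound on the probability that the walk is driven to depth $d$ (each additional level requires hitting an occupied bucket, of which there are exponentially few at large depth). The genuine difficulty is that the configurations actually occurring are only near-canonical, so one must show simultaneously that $\mathbb E[\Phi(C_t)]=\mathcal O(n)$ holds throughout — which itself controls how often a walk meets a displaced key and how deep it can then be pushed — and that the expensive eviction-from-home events of the $t$-th walk contribute $\mathcal O(1)$ in expectation; this needs a careful exposure order for $h_1,\dots,h_k$ so that the layer of each newly visited bucket can be compared with the conditionally typical layer distribution, together with a self-bootstrapping argument that insertions restore keys to their homes roughly as fast as they dislodge them, keeping the total displacement at a steady $\mathcal O(n)$.

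For the parallel statement (ii), note that $\Phi$ and the per-swap accounting above depend only on the current configuration, extended to also record the at-most-one in-transit key carried by each thread; and the inequality ``in conditional expectation over the executing walker's next random bucket, a swap contributes $\mathcal O(1)$ to work-plus-$\Phi$, and contributes $\le 0$ outside $\mathcal O(1)$ designated steps per thread'' holds for \emph{whichever} pending swap the (oblivious, otherwise adversarial) scheduler performs next. Summing over all atomic swaps in their actual order therefore reproduces $\mathbb E[\text{combined work}]=\mathcal O(n)$ with no dependence on the schedule.
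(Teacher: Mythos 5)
Your reduction to the hypergraph picture, the peeling orientation $b(x)$, and the layer-monotonicity observation for \emph{canonical} configurations are all fine, and the potential $\Phi(C)=\sum_x A^{\ell(x)}\mathbf 1[\mathrm{loc}_C(x)\neq b(x)]$ is a reasonable device. But the proof has a genuine gap exactly where you say you ``expect the main obstacle'': the conditional drift bound $\mathbb{E}[T_t+\Delta_t\mid\mathcal F_{t-1}]=\mathcal O(1)$ is only argued for canonical configurations, and after the very first insertion the configuration is no longer canonical (every key touched by a walk typically ends outside its home). Once displaced keys are present, the strictly-increasing-layer argument breaks: a walk that evicts a key from a non-home bucket is no longer constrained by interference monotonicity, so neither the walk length nor the number of expensive ``eviction-from-home'' steps is controlled by the layer structure. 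The proposed fix --- a ``self-bootstrapping argument that insertions restore keys to their homes roughly as fast as they dislodge them, keeping $\mathbb{E}[\Phi(C_t)]=\mathcal O(n)$'' together with ``a careful exposure order'' --- is precisely the hard part of the theorem and is not carried out; note also that $\Phi(C_t)=\mathcal O(n)$ by itself does not bound the probability that a \emph{specific} walk meets a displaced key, because the locations of displaced keys are correlated with the hash values the walk is about to use, and in part (ii) the adversarial scheduler can amplify such correlations. A second, smaller soft spot is the claim $\sum_x A^{\ell(x)}=\mathcal O(n)$ whp: the density-evolution recursion gives geometric (indeed doubly exponential) decay only while layer sizes are polynomially large, and the late layers (the last $o(n)$ keys) need a separate concentration argument; this late-stage behaviour is not free.

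For comparison, the paper avoids per-insertion accounting altogether. It fixes a random peeling orientation $F$ and bounds, for each key $e$, the \emph{total} number of its moves over all insertions via $\mathbb{E}[\mathrm{moves}(e)]\le k+\sum_{e'\in D(e)}\mathbb{E}[\mathrm{moves}(e')]$, where $D(e)$ are the keys whose hyperedges contain $F(e)$; this inequality holds in any configuration and under adversarial scheduling (which is how parallelism comes for free), and the whole problem reduces to showing that the sum of ``peeling numbers'' (paths in the dependence graph) is $\mathcal O(n)$ in expectation. That reduction is then paid for by a refined analysis of the peeling process in the configuration model (many degree-one vertices early, few heavy balls late) --- i.e.\ the technical weight sits exactly at the two places your sketch leaves open. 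As it stands, your argument is an interesting alternative outline, but it does not yet constitute a proof of the theorem.
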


\paragraph{Related Work and Comparison.}
\cref{tab:relatedWork} summarises related work that we now discuss from left to right.
\begin{table}[b]
  \begin{tabular}{lccccc}
      \toprule
                      & \cite{Khosla:Balls-Into-Bins:2013} & \cite{FPSS:Space_Efficient:2005} & \cite{FMM:An_Analysis:2011,FPS:On_the_Insertion:2013} & \cite{FJ:Insertion-time-Cuckoo-journal:2019} & \bfseries new\\
      \midrule
          algorithm & \begin{minipage}{2cm}\footnotesize \centering offline\\construction\end{minipage} & BFS & RW & RW${}'$ & RW\\
          (expected amortised) insertion time & $𝒪(1)$ & $𝒪(1)$ & $\log^{𝒪(1)}(n)$ & $𝒪(1)$ & $𝒪(1)$\\
          least $k$ for load factor $> 50\%$ & $3$ & $≥10$ & $3$ & $≥12$ & $3$\\
          supports load factor $1-ε$ for large $k$ & ✓ & ✓ & ✓ & ✓ & ✗\\
          supports load factor $\ck-ε$ & ✓ & ✗ & ✓ & ✗ & ✗\\
          supports deletions & - & ✓ & ✓ & ✗ & ✗\\
      \bottomrule
  \end{tabular}
  \caption{Guarantees offered by analyses on cuckoo table insertions. The motivation for the third line is that any load factor $< 50\%$ can be achieved with $k = 2$.}
  \label{tab:relatedWork}
\end{table}
\begin{description}
    •[\cite{Khosla:Balls-Into-Bins:2013}] is only included here to show that the case of \emph{static} cuckoo hash tables is well understood with optimal results in all considered categories.
    •[\cite{FPSS:Space_Efficient:2005}] offers a strong analysis of BFS. The only downside is that it does not work for some small $k$ and does not reach all the way to $\ck$. These issues might be resolvable by modernising the proof (the values $\ck$ were not known at the time of writing). Note, however, that even full success on this front would not render an analysis of RW irrelevant as several authors, including \cite{FPSS:Space_Efficient:2005}, see significant \emph{practical} benefits of RW over BFS.
    •[\cite{FMM:An_Analysis:2011,FPS:On_the_Insertion:2013}] propose and improve, respectively, an analysis of RW via graph expansion. It guarantees most desired properties, including a concentration bound on insertion times. The major downside is an only polylogarithmic bound on expected insertion time.
    •[\cite{FJ:Insertion-time-Cuckoo-journal:2019}] are first to prove an $𝒪(1)$ bound on expected random walk insertion time. The proof extends to any load factor $1-ε$ with $ε > 0$. There is a downside, however. Instead of using $k = 𝒪(\log(1/ε))$ hash functions as would be required for the existence of a placement of all keys (and as are used by \cite{FPSS:Space_Efficient:2005} in their BFS analysis), the authors use $k = 𝒪(\log(1/ε)/ε)$ hash functions. To give an example, while $k = 3$ hash functions suffice for $ε = 0.2$ (because $80\% < \c_{3,1} ≈ 92\%$), the analysis of \cite{FJ:Insertion-time-Cuckoo-journal:2019} requires $k ≥ 50$ hash functions for that $ε$.
    Even if the analysis can be tuned to some degree (which seems probable), useful guarantees for practically relevant $k$ would likely remain out of reach. We remark that they use a variant RW${}'$ of RW where a key $x$ searches all its buckets $h₁(x),…,h_k(x)$ for a free space and only moves to a random bucket if all are full.
    •[\cite{FP:RandomWalk-Buckets:2018},] not shown in the table, considers $k = 2$ hash functions, buckets of size $ℓ ≥ 2$ and random walk insertion. The result resembles \cite{FJ:Insertion-time-Cuckoo-journal:2019} in its merits and downsides: Expected insertion time of $𝒪(1)$ is supported at any load factor $1-ε$, provided that $ℓ$ is large enough; the value $ℓ = ℓ(ε)$ required in the analysis is exponentially larger than what is needed for the existence of a placement; and meaningful guarantees for small values of $ℓ$ seem out of reach.
    •[This paper] is the first to guarantee constant time insertions into cuckoo hash tables using $k ∈ \{3,…,9\}$ hash functions. Like \cite{FJ:Insertion-time-Cuckoo-journal:2019} our proof does not consider deletions. The main downside is that our analysis only works for $\frac{m}{n} < \cpk - ε$. Paradoxically, this means that the load factor supported by our analysis decreases when more hash functions are used (indeed, $\cpk → 0$ for $k → ∞$) and the supremum of supported load factors is $\cp_{3,1} ≈ 0.818$ for $k = 3$ hash functions. We mention a potential avenue for overcoming this problem in the conclusion.
\end{description}

\paragraph{Technical Overview.}
A central idea in our approach is to not count the number of evictions caused by a single insertion operation but to take the perspective of a single key and count how often it moves in the course of all $m$ insertion operations combined.

Our proof is inspired by a simple observation: If a key $x$ is assigned a bucket $h_i(x)$ that is assigned to no other key, then $x$ is safely out of the way of other keys as soon as it has been placed in $h_i(x)$. In expectation, this happens after $x$ has moved $k$ times. A constant fraction of keys is “harmless” in this way. Moreover, there are keys $y$ that are assigned a bucket $h_i(y)$ that is assigned to no other key, \emph{except} for some harmless keys. It seems plausible that $y$, too, can quickly find a home in $h_i(y)$ and is only expected to be evicted from it a few times until the harmless keys live up to their name.

A formalisation attempt goes like this: Let $F$ be an injective placement function assigning to each key $x$ a bucket $F(x) ∈ \{h₁(x),…,h_k(x)\}$ (such an $F$ exists whp for $c < \ckl - ε$). We say a key $x$ \emph{depends} on a key $y$ if $F(x) ∈ \{h₁(y),…,h_k(y)\}$, i.e.\ if the position designated for $x$ is admissible for $y$. Let $D(x)$ be the set of all keys that $x$ depends on. Finally, let $\moves(x)$ be the total number of times that $x$ moves during the insertion of all keys. We then have
\begin{equation}
  𝔼[\moves(x)] ≤ k + \sum_{y ∈ D(x)} 𝔼[\moves(y)].\label{eq:overview-moves}
\end{equation}
The “$k$” is due to $x$ being first placed in $F(x)$ after $k$ moves in expectation. It can then only be evicted from $F(x)$ by a key from $D(x)$. Each movement of a key in $D(x)$ has a chance of $\frac 1k$ to evict $x$ from $F(x)$, causing $k$ more moves of $x$ in expectation until $x$ is back in $F(x)$. Hence each move of a key from $D(x)$ can cause at most one move of $x$ in expectation, as (\ref{eq:overview-moves}) suggests. The claim is even true in a more general context we call the \emph{random eviction process} where in each round an adversary choses the key $y$ to be moved among all keys not currently placed in their designated location $F(y)$.
 
 As a way of bounding $𝔼[\moves(x)]$, \cref{eq:overview-moves} is hopelessly circular at first, but it is useful for specific $F$. Indeed, assume that the configuration of keys and buckets is \emph{peelable}, i.e.\ for every subset $X'$ of keys there is a bucket $b^*$ assigned to only one key $x^*∈ X'$. In that case, we can iteratively construct $F$, always picking such a pair $(x^*,b^*)$ uniformly at random, setting $F(x^*) = b^*$ and removing $x^*$ from further consideration. This yields an \emph{acyclic} dependence relation and an acyclic \emph{dependence graph} (the directed graph with one vertex for each key that has the dependence relation as its edge relation). We can then upper bound $𝔼[\moves(x)]$ by a multiple of $\peel_F(x)$, which is the number of paths in the dependence graph that start at $x$. Bounding the expected number of moves of all insertion operations combined by $𝒪(n)$ then amounts to bounding the total number of paths in the dependence graph by $𝒪(n)$.

The second part of our argument
\iffullversion
– contained in the full version of this paper –
\fi
is intimately related to the analysis of $2$-cores in random hypergraphs. We extend Janson and Luczak's “simple solution to the $(ℓ+1)$-core problem” \cite{Luczak:A-simple-solution}, which uses a random process embedded in continuous time where peeling is applied to the configuration model of a random hypergraph. We establish two guarantees concerning the peeling process.
\begin{itemize}
  • Firstly, the guarantee that during “early” rounds of peeling (when $Ω(n)$ keys still remain) there are always $Ω(n)$ candidate pairs $(x^*,b^*)$ to choose from. Intuitively, this large number of choices for the peeling process makes it likely that the dependence graph becomes very “wide” with few long paths. For illustration (the formal argument works differently) assume the maximum path length is $w$  with $w = 𝒪(1)$. Since the indegree of the dependence graph is bounded by $k-1$ this gives a bound of $m·(k-1)^w = Θ(m)$ on the total number of paths as desired.
  • Secondly, the technically demanding guarantee that in the “late” phase of peeling (when only $o(n)$ keys remain) almost all buckets have at most one remaining key assigned to them. Most steps of the peeling process will then not create further edges in the dependence graph. This implies that for each of the paths that already exist in the dependence graph less than one additional path is created in future rounds.
\end{itemize}

\paragraph{Outline.} The rest of this paper is devoted to proving \cref{thm:main}. We introduce two notable auxiliary concepts we call the \emph{random eviction process} (REP) and \emph{peeling numbers}. We reduce \cref{thm:main} to a claim about REP (\cref{sec:REP}) and reduce this claim to an upper bound on peeling numbers (\cref{sec:peeling-numbers}).
\iffullversion
After a deep dive into hypergraph peeling (\cref{sec:peeling-in-config-model}) we obtain said upper bound by counting paths in the dependence graph (\cref{sec:bounding-pn}).
\else
The remaining technical content is found in the full version of this paper.
A deep dive into hypergraph peeling is required (full version Section 4). We then establish the upper bound on peeling numbers by counting paths in the dependence graph (full version Section 5).
\fi

\section{Orientations, Peeling and the Random Eviction Process}
\label{sec:REP}
In this section, we introduce the \emph{random eviction process} (REP), which generalises sequential RW insertions, and formulate a claim on \REPT that implies \cref{thm:main}.

\paragraph{From Hashing to Hypergraphs.} A well-subscribed model for cuckoo hashing involves hypergraph terminology. The set of buckets corresponds to the set $V$ of vertices and each key $x$ corresponds to the hyperedge $\{h₁(x),…,h_k(x)\}$ in the set $E$ of hyperedges. The task of placing all keys then becomes the task of \emph{orienting} $H = (V,E)$ as explained below.

Under the \emph{simple uniform hashing assumption}, the distribution of $H = H_{n,m,k}$ is simple: Each of the $km$ incidences of the $m$ hyperedges are chosen independently and uniformly at random from $V$. Formally this means that hyperedges are multisets of size $k$, possibly containing multiple copies of the same vertex (though in expectation only $𝒪(1)$ do) and $E$ is a multiset possibly containing identical hyperedges (though whp $E$ does not). This issue complicates a few definitions but does not cause any real trouble.

\paragraph{Orientations and Peelings.} A \emph{partial orientation} of a hypergraph $H = (V,E)$ is a function $f : E → V ∪ \{⊥\}$ with $f(e) ∈ e ∪ \{⊥\}$ for each $e ∈ E$ that is injective except for collisions on $⊥$. If $f(e) = ⊥$ then we say that $e$ is \emph{unoriented}, otherwise $e$ is \emph{oriented} (to $f(e)$). We call $f$ an \emph{orientation} if all $e ∈ E$ are oriented.

We can try to construct an orientation $F$ of $H$ greedily by repeatedly selecting a vertex $v$ of degree $1$ arbitrarily as long as one such vertex exists, setting $F(e) = v$ for the unique hyperedge $e$ incident to $v$, and removing $e$ from $H$.
We call the resulting partial orientation $F$ a \emph{peeling} of $H$. If $H$ does not contain a subhypergraph of minimum degree at least $2$ (i.e.\ when the $2$-core of $H$ is empty \cite{Molloy05:Cores-in-random-hypergraphs}) then $F$ is an orientation and we say $H$ is \emph{peelable}. We call $F$ a \emph{random peeling} if the choice of $v$ is made uniformly at random whenever there are several vertices of degree $1$.


\paragraph{The Random Eviction Process.} The \emph{random eviction process} (REP), see \cref{algo:rep} (left), is run on a hypergraph $H = (V,E)$ and maintains a partial orientation $f$ of $H$. The process continues in a sequence of rounds as long as unoriented hyperedges remain, possibly indefinitely. In each round, an unoriented hyperedge $e$ is chosen and oriented to a random incident vertex. If a different hyperedge $e'$ was oriented to that vertex, then this $e'$ is \emph{evicted}, i.e.\ becomes unoriented.

A variant of \REP is the \emph{random eviction process with target orientation} (\REP'), see \cref{algo:rep} (right). It is run on a hypergraph $H$ and an orientation $F$ of $H$. \REPT works just like \REP, except that it terminates only when $f = F$ is reached, and in every round any hyperedge $e$ with $f(e) ≠ F(e)$ may be chosen. We claim:

\def\iold{i_{\mathrm{old}}}
\begin{figure}
  \begin{minipage}{0.48\textwidth}
    \begin{algorithm}[H]
      \algo{\normalfont REP($V,E$)}{
        $f ← \{(e,⊥) \mid e ∈ E\}$\tcp{i.e.\ $f≡⊥$}
        \While{$∃e ∈ E: f(e) = ⊥$}{
          pick such an $e$ arbitrarily\;
          pick a random $v ∈ e$\;
          \If{$∃e'≠e: f(e') = v$}{
            $f(e') ← ⊥$\;
          }
          $f(e) ← v$\;
        }
      }
    \end{algorithm}
  \end{minipage}
  \begin{minipage}{0.50\textwidth}
    \begin{algorithm}[H]
      \algo{\normalfont \bfseries \REPT{}$\bm{(V,E,F)}$}{
        $f ← \{(e,⊥) \mid e ∈ E\}$\;
        \While{$∃e ∈ E: {\bm{f(e) ≠ F(e)}}$}{
          pick such an $e$ arbitrarily\;
          pick a random $v ∈ e$\;
          \If{$∃e'≠e: f(e') = v$}{
            $f(e') ← ⊥$\;
          }
          $f(e) ← v$\;
        }
      }
    \end{algorithm}
  \end{minipage}
  \caption{The random eviction process (REP) is a generalisation of sequential random walk insertion. A variant \REPT only terminates when a specific target orientation $F : E → V$ is reached. Changes are highlighted in bold.}
  \label{algo:rep}
\end{figure}

\begin{proposition}
  \label{prop:main}
  Let $k ∈ ℕ$ with $k ≥ 3$ and $ε > 0$ be constants and $n,m ∈ ℕ$ with $\frac{m}{n} < \cpk - ε$.
  Conditioned on a high probability event, $H = H_{n,m,k}$ is peelable and the random peeling $F$ of $H$ satisfies the following.
  \REPT with target orientation $F$ and an arbitrary\footnote{This allows these choices to be made \emph{adversarially}. The adversary may know all about $H$ and the state of the algorithm but cannot predict future random choices made in line 5.} policy for choosing $e$ in line 4 terminates after $𝒪(n)$ rounds in expectation.
\end{proposition}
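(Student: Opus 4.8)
The plan is to follow the route sketched in the technical overview, taking for $F$ the random peeling itself. Since $\tfrac{m}{n} < \cpk - ε$ is below the $2$-core threshold, $H = H_{n,m,k}$ is peelable whp and the random peeling $F$ exists. For hyperedges $x ≠ y$ I write $x → y$ (``$x$ depends on $y$'') when $F(x) ∈ y$, set $D(x) = \{y : x → y\}$ as in~(\ref{eq:overview-moves}), and let $\peel_F(x)$ be the number of directed paths of the resulting \emph{dependence graph} $G_F$ that start at $x$, so that $\peel_F(x) = 1 + \sum_{y ∈ D(x)} \peel_F(y)$. The high-probability event promised in the statement is ``$H$ is peelable \emph{and} $\sum_{x ∈ E} \peel_F(x) = 𝒪(n)$'', and I would prove the proposition \emph{conditional} on it, deferring to the later sections the (much harder) claim that it occurs whp. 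A first, easy observation is that $G_F$ is acyclic: when a hyperedge $e$ is removed during the peeling, $F(e)$ has degree $1$ in the current hypergraph, so every other hyperedge through $F(e)$ was already removed; hence $x → y$ forces $y$ to precede $x$ in the peeling order, the reverse peeling order topologically sorts $G_F$, and $\peel_F$ is finite and well defined. As the conditioning event constrains only $H$ and $F$, it leaves the random choices of \REPT untouched, so it suffices to bound, for every $(H,F)$ in the event, the expected number of rounds over the randomness of \REPT alone.

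The heart of the argument is~(\ref{eq:overview-moves}). Running \REPT on $(H,F)$ under an arbitrary oblivious adversarial policy, let $\moves(x)$ be the number of rounds whose chosen hyperedge is $x$, so the number of rounds equals $\sum_{x ∈ E}\moves(x)$. I would derive~(\ref{eq:overview-moves}) from two observations. First, whenever $x$ is chosen, the vertex drawn in that round is uniform in $x$ and independent of the adversary's past, hence it equals $F(x)$ with probability at least $\tfrac1k$, and once $f(x) = F(x)$ the hyperedge $x$ is ineligible to be chosen until it is evicted from $F(x)$; writing $E(x)$ for the number of such evictions, the rounds choosing $x$ between two successive moments at which $x$ lands on $F(x)$ form a geometric number of trials each succeeding with probability at least $\tfrac1k$, so $𝔼[\moves(x)] ≤ k\,(1 + 𝔼[E(x)])$. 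Second, $x$ can be evicted from $F(x)$ only in a round whose chosen hyperedge $y ≠ x$ is sent to the vertex $F(x)$; this requires $F(x) ∈ y$, i.e.\ $y ∈ D(x)$, and occurs with probability $\tfrac1k$ each time $y$ is chosen, so $𝔼[E(x)] ≤ \sum_{y ∈ D(x)} \tfrac1k\,𝔼[\moves(y)]$. Combining the two yields~(\ref{eq:overview-moves}). To make the manipulation of possibly infinite expectations rigorous I would run this for \REPT truncated at $T$ rounds and then let $T → ∞$.

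It then remains to unfold~(\ref{eq:overview-moves}) along the DAG: by induction in reverse peeling order one gets $𝔼[\moves(x)] ≤ k\,\peel_F(x)$, the base case of a sink $x$ being $𝔼[\moves(x)] ≤ k = k\,\peel_F(x)$ and the inductive step combining~(\ref{eq:overview-moves}) with $\peel_F(x) = 1 + \sum_{y ∈ D(x)} \peel_F(y)$. Summing over all hyperedges and invoking the conditioning event,
\[
  \sum_{x ∈ E} 𝔼[\moves(x)] \;≤\; k \sum_{x ∈ E} \peel_F(x) \;=\; 𝒪(n),
\]
so \REPT terminates after $�O(n)$ rounds in expectation, which is the proposition.

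I expect the real obstacle to lie entirely outside this reduction, namely in proving $\sum_{x ∈ E}\peel_F(x) = 𝒪(n)$ whp. This is precisely what requires the delicate study of hypergraph peeling in the configuration model — an extension of the Janson--Luczak continuous-time embedding, together with the ``early-phase'' guarantee that $Ω(n)$ vertices of degree $1$ are available while $Ω(n)$ hyperedges remain (which makes $G_F$ ``wide'', with only short paths), and the technically demanding ``late-phase'' guarantee that, once only $o(n)$ hyperedges remain, almost all vertices carry at most one remaining hyperedge (so essentially no further arcs of $G_F$ are created late). Within the reduction itself the only points needing care are keeping the two observations valid against an adversary — which is exactly where atomicity of swaps and the scheduler's obliviousness to future random choices are used — the routine Wald/geometric bookkeeping behind $𝔼[\moves(x)] ≤ k(1+𝔼[E(x)])$, and the familiar nuisance that a configuration-model hyperedge may repeat a vertex, which turns a few of the probabilities $\tfrac1k$ into one-sided estimates and contributes only a negligible correction.
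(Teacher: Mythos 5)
Your reduction is, at its core, the same as the paper's: your derivation of inequality~(\ref{eq:overview-moves}) is the paper's \cref{lem:number-of-moves} (geometric return to $F(x)$ costing $k$ moves, evictions only by hyperedges in $D(x)$ at rate $1/k$ per move), and unfolding it along the acyclic dependence graph to get $\mathbb{E}[\moves(x)] \le k\,\peel_F(x)$ and summing is the paper's \cref{lem:dep-to-moves}. Two caveats, one of which is a real gap in how you set up the conditioning.

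The gap: you take the high-probability event to be ``$H$ is peelable \emph{and} $\sum_{x\in E}\peel_F(x)\le Cn$'' and defer to the later sections the claim that this holds whp. But the peeling analysis (the paper's \cref{prop:peeling-numbers-small}) delivers only a \emph{conditional first-moment} bound, $\mathbb{E}\bigl[\sum_{e\in E}\peel_F(e)\mid\mathcal{E}\bigr]=\mathcal{O}(n)$, for a structural event $\mathcal{E}$ depending on $(H,F)$ alone; it does not assert that the peel-number sum is $\mathcal{O}(n)$ with probability $1-n^{-\Omega(1)}$, and Markov's inequality only converts the moment bound into a constant-probability (not whp) statement. So as written, your proof of the proposition rests on a deferred claim that is strictly stronger than what the rest of the argument provides, and whose proof would require additional concentration work. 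The repair is easy and is exactly what the paper does: condition only on $\mathcal{E}$, note that $\mathcal{E}$ involves only the first two stages of randomness ($H$ and $F$) while \cref{lem:dep-to-moves} holds pointwise in $(H,F)$, and bound $\mathbb{E}[R\mid\mathcal{E}]$ by the tower property, $\mathbb{E}[R\mid\mathcal{E}]\le \mathbb{E}\bigl[k(m+\sum_e\peel_F(e))\mid\mathcal{E}\bigr]=\mathcal{O}(n)$; the ``$\mathcal{O}(n)$ rounds in expectation'' in the statement is meant with the expectation taken over $(H,F)$ given $\mathcal{E}$ as well, not pointwise on the event. A second, smaller point: you define $D(x)$ as a set and wave off repeated vertices as a ``negligible correction''. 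If a hyperedge $y$ contains $F(x)$ with multiplicity $a>1$, each move of $y$ evicts $x$ with probability $a/k$, and your bound $\mathbb{E}[E(x)]\le\frac1k\sum_{y\in D(x)}\mathbb{E}[\moves(y)]$ fails for the set version; the fix is not an error estimate but the multiset convention the paper uses (count $y$ in $D(x)$ with multiplicity $a$, and correspondingly in the dependence multigraph and in $\peel_F$), after which the accounting is exact.
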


\noindent \Cref{prop:main} is proved in the following section. We now show that it implies \cref{thm:main}.

\begin{proof}[Proof of \cref{thm:main}.]
  
  The case of $m$ sequential insertions is equivalent to the case of $m$ parallel insertions where the scheduler only assigns computation time to the thread of least index that has not yet terminated. It therefore suffices to prove \lipicsLabel{(ii)}, where the parallel case with \emph{arbitrary} scheduling is considered.
  

  We deal with $m$ threads, each running RW, executed in an arbitrarily interleaved way. However, the only point where RW interacts with data visible to other threads is the swap, which is assumed to be atomic.
  A sufficiently general case is therefore one where the scheduler always picks an arbitrary thread that has not yet terminated and  that thread is then allowed to run for one iteration of the loop. The correspondence between this process and REP should be clear: The scheduler's arbitrary choice of a thread implicitly chooses an unplaced key in that thread's local variable $x$, which is then placed into a random bucket, possibly evicting a different key. Likewise, REP arbitrarily chooses an unoriented hyperedge, which is then randomly oriented, possibly evicting another hyperedge.

  For \cref{prop:main}'s claim on \REPT to apply to RW, there are two differences to consider.
  \begin{description}
    •[\REP vs.\ \REPT] Assume an adversary wants to \emph{maximise} the expected running times of \REP and \REPT by making bad choices for $e$ in line 4. Her job is \emph{harder} for \REP for two reasons: Firstly, the termination condition is strictly weaker, such that \REP may terminate when \REPT does not. Secondly, her choices for $e$ are restricted to unoriented hyperedges, where \REPT additionally permits oriented hyperedges with $f(e) ≠ F(e)$.

    Intuitively speaking, the relatively weaker adversary in \REP means that the upper bound on expected running time in \cref{prop:main} carries over from \REPT to \REP. More formally, every policy $P$ for line 4 of REP is also valid for REP' and under the natural coupling random coupling REP' with $P$ takes always at least as long as REP with $P$.
    •[\REP vs. RW: $\iold$.] In RW an evicted key is not allowed to immediately move back into the bucket $\iold$ it was just evicted from. The intuition is that this avoids a needless back-and-forth that otherwise occurs in $1$ out of every $k$ evictions. However, the author is not aware of a simple proof that the use of $\iold$ is an improvement. Instead, we will check that the relevant part of the argument (\cref{lem:number-of-moves}) works for both cases.\qedhere
  \end{description}
\end{proof}

\section{Bounding the Number of Evictions using Peeling Numbers}
\label{sec:peeling-numbers}

We now introduce the concept of peeling numbers and bound the number of evictions occurring in \REPT in terms of them. This proves \cref{prop:main} but leaves the task of bounding peeling numbers for
\iffullversion
  later.
\else
 the full version of this paper.
\fi

\paragraph{Direct Dependence and Numbers of Moves.} Consider a peelable hypergraph $H = (V,E)$ and a peeling $F : E → V$ of $H$. For $e≠e' ∈ E$ we say that $e$ \emph{directly depends} on $e'$ if $F(e) ∈ e'$. This implies that $e$ is peeled after $e'$, making the transitive closure of direct dependence an acyclic relation. We define $D(e) = D_F(e)$ as the set of all $e'$ that $e$ directly depends on, or more precisely: $D(e)$ is a multiset containing $e'$ with the same multiplicity with which $e'$ contains $F(e)$.

Now consider a run of \REPT with target orientation $F$ (and an arbitrary policy for line 4). For $e ∈ E$ let $\moves(e)$ be the number of times that $e$ is selected in line 4 of \REPT (this is one more than the number of times that $e$ is evicted).

\begin{lemma}
  \label{lem:number-of-moves}
  For any $e ∈ E$ we have
    $𝔼[\moves(e)] ≤ k + \sum_{e' ∈ D(e)} 𝔼[\moves(e')]$.
\end{lemma}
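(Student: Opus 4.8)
The plan is to fix a hyperedge $e$ and track its trajectory under \REPT. Call a move of $e$ (i.e.\ a selection of $e$ in line~4) \emph{decisive} if the vertex drawn in line~5 equals $F(e)$, so that $e$ becomes \emph{settled} ($f(e)=F(e)$), and \emph{wasteful} otherwise; write $e_t,v_t$ for the hyperedge and vertex chosen in lines~4--5 of round~$t$. Partition time into alternating \emph{unsettled periods} (maximal intervals with $f(e)\neq F(e)$) and settled periods. Since \REPT starts from $f\equiv\bot$ the first period is unsettled; a fresh unsettled period begins exactly when $e$ is evicted from $F(e)$ (this is the only way $f(e)$ changes from $F(e)$ to something else); and an unsettled period ends only via a decisive move of $e$ (the adversary may only select $e$ while $f(e)\neq F(e)$). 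Hence, letting $N$ be the number of times $e$ is evicted from $F(e)$, there are exactly $N+1$ unsettled periods, and $\moves(e)=\sum_{p=1}^{N+1}G_p$, where $G_p$ is the number of moves of $e$ during its $p$-th unsettled period.

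Two quantitative inputs drive the proof. \emph{First}, $\mathbb{E}[G_p\mid\mathcal F_{\rho_p}]\le k$ on the event that a $p$-th unsettled period exists, where $\rho_p$ is the (stopping) time at which it begins: conditioned on the entire history up to any move of $e$, the vertex drawn in line~5 is uniform over the $k$ incidences of $e$, hence equals $F(e)$ with probability at least $1/k$ regardless of the adversary's choices, so the number of moves of $e$ until the next decisive one is stochastically dominated by a $\mathrm{Geom}(1/k)$ variable, of mean $k$. \emph{Second}, $\mathbb{E}[N]\le\frac1k\sum_{e'\in D(e)}\mathbb{E}[\moves(e')]$: an eviction of $e$ from $F(e)$ at round $t$ forces $e_t=e'$ for some $e'\neq e$ with $F(e)\in e'$ and $v_t=F(e)$; conditioning on $(\mathcal F_{t-1},e_t)$ and summing over $t$ gives $\mathbb{E}\big[\#\{t:e_t=e',\ v_t=F(e)\}\big]=\frac{\mathrm{mult}_{F(e)}(e')}{k}\,\mathbb{E}[\moves(e')]$, and summing over the distinct such $e'$ reproduces the multiset sum with the factor $\frac1k$.

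I would then combine these by an optional-stopping (Wald-type) argument. Since $\{N+1\ge p\}=\{\rho_p<\infty\}\in\mathcal F_{\rho_p}$, we get $\mathbb{E}\big[G_p\,\mathbf{1}_{\{\rho_p<\infty\}}\big]=\mathbb{E}\big[\mathbf{1}_{\{\rho_p<\infty\}}\,\mathbb{E}[G_p\mid\mathcal F_{\rho_p}]\big]\le k\Pr[N+1\ge p]$; summing over $p\ge1$ yields $\mathbb{E}[\moves(e)]\le k\,\mathbb{E}[N+1]=k+k\,\mathbb{E}[N]\le k+\sum_{e'\in D(e)}\mathbb{E}[\moves(e')]$ by the second input. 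The inequality is to be read in $[0,\infty]$: if the right-hand side is infinite there is nothing to prove, and if it is finite the second input forces $N<\infty$ almost surely, so all the sums above converge.

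The step I expect to require the most care is the Wald-type bookkeeping, because $N+1$ is a \emph{global} count rather than an $e$-local stopping index; routing everything through the stopping times $\rho_p$ and the events $\{\rho_p<\infty\}$, rather than invoking a stopped-sum identity directly, is what makes this rigorous. A secondary point — the one flagged by the authors when reducing \cref{thm:main} to the REP setting — is the RW variant in which an evicted hyperedge may not immediately return to the vertex it just left: there the first move of each unsettled period following an eviction from $F(e)$ is automatically wasteful, but the subsequent moves settle $e$ with probability at least $\frac1{k-1}$, so the per-period mean is still at most $1+(k-1)=k$ and the same argument carries over (with the matching adjustment on the eviction side).
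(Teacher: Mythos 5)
Your proof is correct and is essentially the paper's own argument made formal: the paper likewise splits $\moves(e)$ into an initial settling phase of expected length $k$ plus post-eviction excursions, bounds each excursion by a geometric variable of mean $k$, and charges evictions of $e$ from $F(e)$ to moves of hyperedges in $D(e)$ at rate $\mathrm{mult}_{F(e)}(e')/k$ (including the multiset adaptation), so your stopping-time/Wald bookkeeping over unsettled periods is just a more careful rendering of the paper's identity $\mathbb{E}[m_+]=\mathbb{E}[m_D]$. The only place where your sketch is too quick is the closing aside on the $i_{\mathrm{old}}$ variant (which is not part of the lemma as stated): there the eviction probability per dependent move rises to $1/(k-1)$ while the per-period mean stays $k$, so the naive product per dependent move is $k/(k-1)>1$ and the argument does not carry over with only ``the matching adjustment on the eviction side''; the paper instead pairs each eviction with the forced wasteful move it triggers, and a comparably finer accounting would be needed in your framework.
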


\begin{proof}
  For clarity, we ignore complications that are due to multisets at first. Let $m₁$ be the number of times that $e$ moves until $f(e) = F(e)$ holds for the first time. Whenever $e$ is selected to be moved, the chance to select $f(e) = F(e)$ is $\frac{1}{k}$, so clearly $𝔼[m₁] = k$. Afterwards, $e$ may not be selected anymore until evicted. Only hyperedges in $D(e)$ can evict $e$ from $F(e)$ and when selected they do so with probability $\frac{1}{k}$, causing another $k$ moves of $e$ in expectation. 
  It follows that $𝔼[m₊] = 𝔼[m_D]$ where $m₊ := \moves(e) - m₁$ and where $m_D$ is the number of times that a hyperedge from $D(e)$ moves \emph{while} $f(e) = F(e)$. The claim now follows from $m_D ≤ \sum_{e' ∈ D(e)} \moves(e')$ and  linearity of expectation.
  
  When $D(e)$ is a multiset the argument can be adapted: Whenever a hyperedge $e'$ moves that is contained in $D(e)$ with multiplicity $a > 1$ it has an increased chance of $\frac{a}{k}$ to move to $F(e)$. But this is reflected in our bound since $𝔼[\moves(e')]$ is counted $a$ times.

  Let us now consider a variant of the claim that incorporates the “$\iold$” feature of RW as promised in the proof of \cref{thm:main}. In particular, a hyperedge never moves into the position it was last evicted from. We now have $𝔼[m₁] < k$ because all moves after the first move have an improved chance of $\frac{1}{k-1}$ to select $F(e)$. To compare $𝔼[m₊]$ and $𝔼[m_D]$, we can distinguish two kinds of moves. Concerning moves \emph{away from} $F(e)$, $m_D$ counts the same or one more compared to $m₊$. All other moves have a chance of $\frac{1}{k-1}$ to end in $F(e)$ and contribute the same amount to $𝔼[m₊]$ and $𝔼[m_D]$ as before. The same adaptation to multisets applies.
\end{proof}

\paragraph{The Peeling Number.}
We define the peeling number of $e ∈ E$ recursively as
\begin{equation}
  \peel(e) = \peel_F(e) := \sum_{e' ∈ D_F(e)} (1+\peel(e')).\label{def:pn}
\end{equation}
Peeling numbers are well-defined by acyclicity of direct dependence, the base case being $\peel(e) = 0$ for any $e$ with $D(e) = ∅$.
The idea is that $\peel(e)$ counts the number of hyperedges that $e$ directly \emph{or indirectly} depends on, in other words, those hyperedges $e'$ that must be peeled before $e$ can be peeled. However, some $e'$ may be counted multiple times.
\iffullversion
  A more precise interpretation as a number of dependence \emph{paths} is given in \cref{sec:bounding-pn}.
\fi
The relevance of peeling numbers lies in the following lemma.

\begin{lemma}
  \label{lem:dep-to-moves}
  Let $H$ be a peelable hypergraph with a peeling $F$. Let $R$ be the number of rounds until \REPT with target orientation $F$ terminates. We have $𝔼[R] ≤ k·(m+\sum_{e ∈ E}\peel_F(e))$.
\end{lemma}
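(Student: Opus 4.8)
The plan is to combine \cref{lem:number-of-moves} with the recursive definition \eqref{def:pn} by induction along the (acyclic) direct-dependence relation, after first observing that $R$ is precisely the sum of the individual move-counts.

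First I would record the identity $R = \sum_{e \in E} \moves(e)$ (with the convention that both sides equal $+\infty$ if \REPT never terminates): every round of \REPT selects exactly one hyperedge in line 4, and $\moves(e)$ counts how many rounds select $e$, so the two sides count rounds in two ways. By linearity of expectation (valid for $[0,\infty]$-valued quantities) it then suffices to prove, for every $e \in E$, the bound
\[
  𝔼[\moves(e)] ≤ k\,(1 + \peel_F(e)),
\]
because summing this over the $m = |E|$ hyperedges yields exactly $𝔼[R] ≤ k\,(m + \sum_{e ∈ E}\peel_F(e))$.

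Next I would establish the displayed bound by induction on the position of $e$ in a topological order of the direct-dependence DAG; such an order exists because, as noted just before \cref{lem:number-of-moves}, the transitive closure of direct dependence is acyclic (one may concretely take the reverse of a peeling order witnessing $F$). In the base case $D(e) = ∅$, so $\peel_F(e) = 0$ and \cref{lem:number-of-moves} gives $𝔼[\moves(e)] ≤ k = k(1+\peel_F(e))$. For the inductive step, every $e' ∈ D(e)$ precedes $e$ in the order, so by the induction hypothesis $𝔼[\moves(e')] ≤ k(1 + \peel_F(e'))$; substituting into \cref{lem:number-of-moves} and then applying \eqref{def:pn},
\[
  𝔼[\moves(e)] ≤ k + \sum_{e' ∈ D(e)} 𝔼[\moves(e')] ≤ k + k\sum_{e' ∈ D(e)}(1+\peel_F(e')) = k + k\,\peel_F(e) = k(1+\peel_F(e)),
\]
where throughout the sums are taken over the multiset $D(e)$ with multiplicities, consistently with \eqref{def:pn} and \cref{lem:number-of-moves}.

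I do not expect a serious obstacle here, since all the probabilistic content is already packaged in \cref{lem:number-of-moves}; the only points requiring a little care are bookkeeping ones. One is that \REPT may fail to terminate, so the quantities a priori live in $[0,\infty]$; this is harmless because \cref{lem:number-of-moves} is an inequality of $[0,\infty]$-valued expectations and the induction establishes that each $𝔼[\moves(e)]$ is in fact at most the finite value $k(1+\peel_F(e))$, whence $𝔼[R] < \infty$ and \REPT terminates almost surely. The other is the multiset handling of $D(e)$: a hyperedge $e'$ occurring in $D(e)$ with multiplicity $a$ contributes $a$ copies of $𝔼[\moves(e')]$ to the bound of \cref{lem:number-of-moves} and $a$ copies of $(1+\peel_F(e'))$ to $\peel_F(e)$ in \eqref{def:pn}, so the two sides match term by term.
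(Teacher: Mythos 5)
Your proposal is correct and follows essentially the same route as the paper's proof: bound $𝔼[\moves(e)] ≤ k(1+\peel_F(e))$ by induction along the acyclic direct-dependence relation using \cref{lem:number-of-moves} together with \eqref{def:pn}, then sum over all $e ∈ E$ via $R = \sum_{e∈E}\moves(e)$ and linearity of expectation. Your extra remarks on the topological order, the $[0,\infty]$-valued expectations, and the multiset bookkeeping are careful refinements of the same argument rather than a different approach.
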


\begin{proof}
  For a single $e ∈ E$ we have $𝔼[\moves(e)] ≤ k·(1+\peel(e))$ because
  \begin{align*}
    𝔼[\moves(e)] &\stackrel{Lem. \ref{lem:number-of-moves}}{≤} k + \sum_{e' ∈ D(e)} 𝔼[\moves(e')] \stackrel{\mathrm{Induction}}{≤} k + \sum_{e' ∈ D(e)} k·(1+\peel(e'))\\
    &\eqRefRel{def:pn}{Eq.}{=} k+k·\peel(e) = k·(1+\peel(e)).
  \end{align*}
  Since the total number $R$ of rounds of \REPT is the sum of all moves we conclude
  \[𝔼[R] = 𝔼\Big[\sum_{e∈E}\moves(e)\Big] 
  ≤ \sum_{e∈E}k·(1+\peel(e)) = k·(m+\sum_{e∈E}\peel(e)).\qedhere\]
\end{proof}

The remaining technical challenge is to bound the sum of all peeling numbers:

\begin{proposition}
  \label{prop:peeling-numbers-small}
  Let $H$ be as in \cref{prop:main}. There is a high probability event $\mathcal{E}$ such that, conditioned on $\mathcal{E}$, $H$ is peelable and the peeling numbers with respect to the random peeling $F$ of $H$ satisfy
  \[𝔼\Big[\sum_{e ∈ E} \peel_F(e) \mid \mathcal{E}\Big] = 𝒪(n).\]
\end{proposition}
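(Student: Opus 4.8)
The plan is to replace $\sum_{e}\peel_F(e)$ by a path count, and then follow that count along the random peeling, splitting the peeling into a long ``bulk'' phase and a short final phase that are handled by the two structural guarantees sketched in the technical overview.

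\emph{Reduction to counting paths.} Unrolling the recursion \eqref{def:pn} shows that $q(e):=1+\peel_F(e)$ is exactly the number of directed paths starting at $e$ in the \emph{dependence graph} $G_F$ — the digraph on vertex set $E$ with an arc $e\to e'$ for each incidence of the vertex $F(e)$ with $e'$. Since $e\to e'$ forces $e$ to be peeled strictly after $e'$, $G_F$ is a DAG, and $\sum_e q(e)$ is the total number of directed paths in $G_F$ (trivial paths included). Two facts will be used: in-degrees in $G_F$ are at most $k-1$ (injectivity of $F$, as already noted in the overview), and a path is born precisely when its topmost hyperedge $e_0$ — the one peeled last — is peeled, at which moment it contributes $q(e_0)=1+\sum_{e'\in D(e_0)}q(e')$ new paths while leaving the $q$-values of earlier hyperedges unchanged. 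In particular, writing $e_1,\dots,e_m$ for the hyperedges in peeling order and $Q_t:=\sum_{s\le t}q(e_s)$, the quantity $q(e_s)$ is already determined at step $s$ (as $D(e_s)$ consists of hyperedges peeled before $s$), $\sum_e q(e)=Q_m$, and it remains to show $\mathbb E[Q_m\mid\mathcal E]=\mathcal O(n)$ for a suitable high-probability event $\mathcal E$.

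\emph{The bulk phase.} Let $v_t:=F(e_t)$ and let $S_{t-1}$ be the set of degree-$1$ vertices available before step $t$, so $v_t$ is uniform in $S_{t-1}$ and $q(e_t)=1+\sum_{e'\ni v_t,\,e'\ne e_t}q(e')$ with the sum over already-peeled hyperedges. Working in the configuration model and using Janson and Luczak's continuous-time embedding of peeling, take $\mathcal E$ to be the (high-probability) event on which $H=H_{n,m,k}$ is peelable and on which, for a small constant $\delta>0$ and some constant $c_\delta>0$, one has $|S_{t-1}|\ge c_\delta n$ at every step $t$ before which more than $\delta n$ hyperedges remain (the first guarantee). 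Calling these the \emph{bulk steps} and conditioning on the history $\mathcal F_{t-1}$, on $\mathcal E$ we get
\[
  \mathbb E\!\left[q(e_t)\mid\mathcal F_{t-1}\right]\;=\;1+\frac{1}{|S_{t-1}|}\sum_{v\in S_{t-1}}\ \sum_{\substack{e'\ni v\\ e'\text{ peeled}}}q(e')\;\le\;1+\frac{k}{c_\delta n}\,Q_{t-1},
\]
because a peeled hyperedge has at most $k$ of its incidences in $S_{t-1}$. Hence $\mathbb E[Q_t\mid\mathcal F_{t-1}]\le(1+\tfrac{k}{c_\delta n})Q_{t-1}+1$ at every bulk step, and since there are at most $n$ of them $Q$ is multiplied overall by at most $(1+\tfrac{k}{c_\delta n})^n=e^{\mathcal O(1)}$, giving $\mathbb E[Q_t\mid\mathcal E]=\mathcal O(n)$ at the end of the bulk phase, with the constant depending only on $k$ and $\delta$.

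\emph{The final phase and the main obstacle.} For the last $\le\delta n$ hyperedges the lower bound on $|S_{t-1}|$ fails, so the estimate above is useless and the second guarantee must take over. On $\mathcal E$ (suitably enlarged) one has that, once only $\delta n$ hyperedges remain, all but $o_\delta(n)$ surviving vertices have degree $\le1$; hence almost every remaining peeling step uses a vertex $v_t$ whose only surviving incidence is $e_t$, so that $D(e_t)=\varnothing$, $q(e_t)=1$, and the step creates no new arc of $G_F$. Writing $L$ for the set of the last $\delta m$ peeled hyperedges, the number of paths born in the final phase is $\sum_{e_0\in L}q(e_0)=|L|+\sum_{e_0\in L}\peel_F(e_0)$, and the goal is to show the second summand is $\mathcal O(n)$: then $Q_m\le Q_{\text{bulk end}}+\delta m+\mathcal O(n)=\mathcal O(n)$. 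Intuitively, $\peel_F(e_0)>0$ only for the few final hyperedges whose peeling does create arcs, and for those one needs that the bulk hyperedges they point to do not have large $q$-values — equivalently, that the sparse set of arcs created in the final phase is spread out and does not concentrate on the (few, by Markov) high-$q$ bulk hyperedges. This is exactly the technically demanding part: establishing that near the very end of peeling almost all surviving buckets hold at most one key, with enough quantitative uniformity to forbid this concentration, requires pushing Janson and Luczak's continuous-time analysis well past the ``is the $2$-core empty'' dichotomy into control of the final $o(n)$ rounds, where fluid-limit concentration is weakest. I expect essentially all of the work to sit here; the path reduction and the bulk estimate are bookkeeping plus a single supermartingale step, and the bulk argument barely uses $\frac mn<\cpk-\varepsilon$ beyond peelability and a linear lower bound on the number of degree-$1$ vertices.
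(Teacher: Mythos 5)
Your reduction to counting directed paths in the dependence graph and your bulk estimate are sound and close in spirit to what the paper does (the paper runs the count per fixed end vertex, via a dynamic program that processes hyperedges in peeling order, and uses the guarantee of $\Omega(n)$ degree-one vertices up to a \emph{constant time} $t_0$ of the continuous-time peeling process to get a \emph{constant} expected count per end vertex; your global version with the $(1+\tfrac{k}{c_\delta n})$ recursion is fine up to the usual care about conditioning on $\mathcal{E}$, which is future-measurable and should be handled with an indicator/stopping argument as in the paper). The genuine gap is the final phase, which you explicitly leave open ("I expect essentially all of the work to sit here") and whose plausibility you support with an incorrect deduction: you argue that in the late phase $v_t=F(e_t)$ typically has "only surviving incidence $e_t$, so that $D(e_t)=\varnothing$". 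But \emph{every} vertex selected by the peeling process has surviving degree $1$ at the moment it is selected; $D(e_t)$ consists of the \emph{already peeled} hyperedges containing $F(e_t)$, i.e.\ it is governed by the initial degree of $F(e_t)$, about which the surviving-degree condition says nothing. By your reasoning every peeling number would be zero. So the late-phase claim "$q(e_t)=1$ for almost all late steps" does not follow from the stated guarantee, and the route you then sketch (the few late arcs are "spread out" and avoid high-$q$ bulk hyperedges) is both unproven and fragile: the bulk bound controls only expectations, so individual $q$-values can be as large as $\Theta(n)$, and late arcs can chain among themselves and compound, which a mere anti-concentration statement does not rule out.

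The paper closes exactly this hole with a different mechanism. In the configuration-model/continuous-time peeling it proves that after a constant time $t_0$ the heavy balls (balls not alone at their vertex) number at most $\tfrac{1}{2k}$ of all remaining balls, uniformly until the end of the process (Proposition~\ref{prop:ball-counts}\,(ii)); note this is a statement about \emph{remaining} ball multiplicities, not about which late steps create dependence arcs. Then, in the path-counting dynamic program, it tracks two quantities: the total accumulated weight $T$ and the weight $A$ sitting on vertices that are still \emph{alive} (still have a ball). A late peeling round moves weight $w$ from a now-dead vertex onto $k-1$ vertices, each of which stays alive only if the ball taken from it was heavy, hence with probability at most $\tfrac{1}{2k}$; consequently $T+(2k-1)A$ is a supermartingale after time $t_0$, and the total path count can grow by at most a factor $2k$ during the entire late phase. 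Your proposal contains no substitute for this step, so as it stands it proves only the bulk bound and not Proposition~\ref{prop:peeling-numbers-small}.
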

\iffullversion
  Let us show how \cref{prop:peeling-numbers-small} implies \cref{prop:main}.
\else
  A prove is found in the full version of this paper (Section 5) and requires a detailed analysis of the peeling process (Section 4). We conclude this extended abstract with showing how \cref{prop:peeling-numbers-small} implies \cref{prop:main}.
\fi

\begin{proof}[Proof of \cref{prop:main}]
  We take the opportunity to clarify the structure of our probability space. There are three random experiments, performed in sequence: First, we pick a random hypergraph $H$. Second, if $H$ is peelable, we pick a random peeling $F$ of $H$ and observe the peeling numbers. Last, we execute \REPT{}$(H,F)$ and observe which moves are made. Note that the high probability event $\mathcal{E}$ from \cref{prop:peeling-numbers-small} only relates to the first two steps (it does not relate to any moves). \Cref{lem:dep-to-moves} only relates to the last step and does not require $H$ and $F$ to be random.
  For the number $R$ of rounds of $\REPT$ we obtain:
  \begin{align*}
    𝔼[R \mid \mathcal{E}] &= 𝔼\big[𝔼[R \mid H,F,\mathcal{E}] \mid \mathcal{E}\big] = 𝔼\big[𝔼[R \mid H,F] \mid \mathcal{E}\big]
    \refrel{lem:dep-to-moves}{Lem.}{≤} 𝔼\Big[k·(m+\sum_{e∈E}\peel(e)) \mid \mathcal{E}\Big]\\
    &= km + k·𝔼\Big[\sum_{e∈E}\peel(e) \mid \mathcal{E}\Big]
    \refrel{prop:peeling-numbers-small}{Prop.}{=} km + k·𝒪(n) = 𝒪(n).\qedhere
  \end{align*}
\end{proof}

\iffullversion

\section{Peeling in the Configuration Model in Continuous Time}
\label{sec:peeling-in-config-model}

Following ideas by Janson and Luczak \cite{Luczak:A-simple-solution} we introduce two shifts in perspective that affect the structure of the probability space in which $H = H_{n,m,k}$ and the random peeling $F$ of $H$ reside. The first is to use a \emph{configuration model}, which allows jointly revealing $F$ and $H$ in small increments, instead of “wasting” the randomness contained in $H$ in the beginning. The second is a switch to \emph{continuous time} that makes some calculations more elegant.

Let us try to point out which of this section's content is genuinely new. Janson and Luczak \cite{Luczak:A-simple-solution} were interested in $(ℓ+1)$-cores of random graphs. Since our peeling process relates to $2$-cores of $k$-uniform hypergraphs, we have to generalise their work to $k > 2$ while collapsing $ℓ$ to $1$. Moreover, we require a more detailed understanding of the “late” rounds of peeling. Ultimately we could not reuse any existing results due to formal complications. If pressed we would claim that \cref{prop:ball-counts} \itemRef{it:many-light-balls} was essentially known and \cref{prop:ball-counts} \itemRef{it:few-heavy-balls} is new.

\paragraph{Peeling in the Configuration Model.}
The configuration model is a convenient way to represent distributions on graphs, see e.g.\ \cite{C:Cores:2004,FR:The_k-orientability:2007,MR:GiantComponentDegreeSequence:1998,Luczak:A-simple-solution,K:Poisson:2006}. 
The probability space is given by a sequence $(d₁,…,d_n)$ of degrees where $2m = \sum_{i = 1}^n d_i$ is even. A random (multi-) graph is obtained from the distribution by attaching a corresponding set of $2m$ half-edges to $n$ vertices and randomly matching the half-edges to form edges.

Applying this to hypergraphs allows us to reveal $H = H_{n,m,k}$ in two separate steps. First, $km$ balls are independently and uniformly assigned to the $n$ vertices. The numbers $d₁,…,d_n$ of balls at each vertex form the degree sequence of $H$. They also constitute a configuration model for the second step. Indeed, the remaining randomness lies in partitioning the $km$ balls uniformly at random into $m$ sets of size $k$ each. Each set of $k$ balls corresponds to one hyperedge of $H$ given by the corresponding multiset of $k$ vertices. It is useful to realise that the partitioning of the balls need not be revealed all at once but can be revealed in tandem with a random peeling $F$ of $H$.


Concretely, in each round of the peeling process, we select a random vertex $v$ of degree $1$ and the single ball at that vertex. We then reveal the hyperedge that this ball contributes to by selecting $k-1$ other balls uniformly at random. This reveals one hyperedge $e$ of $H$ and $F(e) = v$. After removing the $k$ balls we obtain a configuration model for the remaining hyperedges. We continue the process as long as at least one vertex of degree $1$ remains.

\paragraph{Continuous Time.}
We now introduce an equivalent way of selecting remaining balls uniformly at random. Each ball is independently assigned a random lifetime with distribution $\Exp(1)$. With probability $1$ all lifetimes are different. The peeling process is now embedded in continuous time. We begin at $t = 0$ and repeat the following two steps. First, select a vertex of degree $1$ uniformly at random and instantly remove its ball (i.e.\ this ball is removed before its lifetime has expired). Then, wait for $k-1$ balls to die (these are also removed) and form a hyperedge with these $k$ balls as before. The process continues until the \emph{stopping time} $τ$ when no vertex of degree $1$ can be selected any more.

\paragraph{Light and Heavy Balls.}
Let $B(t)$ for $0 ≤ t ≤ τ$ be the number of remaining balls at time $t$ (excluding balls removed at time $t$).
We call a ball \emph{light} if it is the only remaining ball in its vertex and \emph{heavy} otherwise. Let $L(t)$ and $H(t)$ with $L(t)+H(t) = B(t)$ be the corresponding ball counts at different points $0 ≤ t ≤ τ$ in time.

The following proposition is a central ingredient for proving \cref{prop:peeling-numbers-small} in \cref{sec:bounding-pn}. It claims that \lipicsLabel{(i)} in the short term there is always a linear number of light balls and \lipicsLabel{(ii)} in the long run, the heavy balls make up only a small fraction of the remaining balls.
\begin{proposition}
  \label{prop:ball-counts}
  \begin{enumerate}[(i)]
    • \label{it:many-light-balls} Let $c = \cpk - ε$. Then for any $t₀ ≥ 0$ we have $τ ≥ t₀$ whp and there exists $δ = δ(ε,t₀) > 0$ such that
    $\min_{t ∈ [0,t₀]} L(t) ≥ δn$ whp.
    • \label{it:few-heavy-balls} There exists a constant $t₀$ such that $H(t) ≤ \frac{1}{2k}B(t)$ for all $t ∈ [t₀,τ]$ whp.
  \end{enumerate}
\end{proposition}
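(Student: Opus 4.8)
I would run the continuous-time peeling process directly on the configuration model and track the empirical load profile: let $N_j(t)$ be the number of vertices carrying exactly $j$ balls at time $t$, so that $L(t)=N_1(t)$, $H(t)=\sum_{j\ge 2}jN_j(t)$ and $B(t)=\sum_{j\ge 1}jN_j(t)$. Two structural facts are the backbone. First, conditioned on the history up to time $t$, the surviving balls still form a configuration model with the current load sequence, and $(N_j(t))_j$ is a pure-jump Markov process with explicit rates: deaths occur at total rate $B(t)$ and hit a vertex with probability proportional to its current load, while active removals occur once per $k-1$ deaths and always hit a uniformly random degree-$1$ vertex. Second — and this is what we exploit in the late phase — a ball at a vertex of load $\ge 2$ can leave only through a death, never through an active removal; since vertices never gain balls, each load $D_v(t)$ is non-increasing and $x\mapsto x\cdot[x\ge 2]$ is non-decreasing on $\mathbb{N}$, so $H(t)$ is a non-increasing function of $t$. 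Following Janson and Luczak~\cite{Luczak:A-simple-solution}, a martingale / differential-equation argument then gives, on every fixed interval $[0,T]$, uniform convergence in probability of $(N_j(t)/n)_j$ to a deterministic family; since $c=\cpk-\varepsilon$ is below the $2$-core threshold, this family is in the limit $\Po(\mu(t))$ for a function $\mu$ that decreases continuously from $\mu(0)=kc$ to $\lim_{t\to\infty}\mu(t)=0$ (this is exactly what $\cpk$ encodes: below the threshold the peeling ``trajectory'' never stalls). In particular $B(t)/n\to b(t)=\mu(t)$, $L(t)/n\to \ell(t)=\mu(t)e^{-\mu(t)}$ and $H(t)/n\to h(t)=\mu(t)\bigl(1-e^{-\mu(t)}\bigr)$, so $h(t)/b(t)=1-e^{-\mu(t)}\downarrow 0$.

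Part~(i) is then immediate: fixing $t_0$, the function $\ell$ is continuous and strictly positive on $[0,t_0]$, so $\delta':=\min_{t\in[0,t_0]}\ell(t)>0$; by the uniform concentration, whp $L(t)\ge\tfrac12\delta' n$ for all $t\in[0,t_0]$, which gives the claim with $\delta=\delta'/2$, and since $L(t)>0$ throughout there is always a degree-$1$ vertex (and $B(t)>0$), so the process has not stopped, i.e.\ $\tau\ge t_0$.

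For part~(ii) I would split at a large constant $T$. Because $h(t)/b(t)\to 0$, we may fix constants $t_0\le T$ with $h(t)/b(t)\le\tfrac1{4k}$ for all $t\ge t_0$; uniform concentration on the fixed interval $[t_0,T]$ then yields $H(t)\le\tfrac1{2k}B(t)$ for all $t\in[t_0,T]$ whp. The real work is the tail $[T,\tau]$, where I would couple the process after time $T$ with $n$ independent per-vertex death clocks: for $t\ge T$, each of the $D_v(T)$ balls at $v$ survives to time $t$ independently with probability $e^{-(t-T)}$, and since active removals never touch a vertex of load $\ge 2$ they do not affect the event $\{D_v(t)\ge 2\}$ nor the value of $D_v(t)$ on it. Hence $H(t)=\sum_v Y_v\cdot[Y_v\ge 2]$ with $Y_v\sim\Bin\!\bigl(D_v(T),e^{-(t-T)}\bigr)$ independent; conditioning on the load profile at time $T$ (whp close to $\Po(\mu_T)$ with maximum $\Delta=O(\log n)$) gives $\mathbb{E}[H(t)]\approx e^{-2(t-T)}\,n\mu_T^{2}$, decaying at rate $2$. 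Meanwhile a routine supermartingale bound — the total ball-loss rate is at most $\tfrac{k}{k-1}B(t)$ — gives $B(t)=\Omega\!\bigl(n\mu_T\,e^{-\frac{k}{k-1}(t-T)}\bigr)$ whp, decaying only at rate $\tfrac{k}{k-1}<2$; here the hypothesis $k\ge 3$ enters. Consequently $H(t)/B(t)\lesssim\mu_T\,e^{-(2-\frac{k}{k-1})(t-T)}\le \mu_T$, so choosing $T$ (hence $\mu_T$) small enough makes $H(t)\le\tfrac1{2k}B(t)$ for all $t\ge T$ in expectation. To upgrade this to a whp pointwise statement I would use that $H$ and $B$ are monotone (so control on a fine grid suffices, each being a sum of $O(n)$ independent $O(\log n)$-bounded contributions to which Bernstein applies), together with the observation that every vertex heavy at time $T$ becomes light within $O(\log n)$ additional time — a union bound over the $\le n$ such vertices, using that a heaviness duration is stochastically dominated by a sum of at most $\Delta$ independent $\Exp(2)$ variables — so that whp $H(t)\equiv 0$, and the bound holds trivially, once $B(t)$ has dropped below any fixed constant.

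I expect the tail of part~(ii) to be the main obstacle. The clean differential-equation concentration degrades as $b(t)\to 0$ and is silent once $B(t)=o(n)$, so one genuinely needs the coupling with independent death processes, carried out carefully enough to control (a) the interplay between the on-the-fly configuration-model revelation and the continuous-time clock, (b) the joint concentration of $H$ and $B$ down to polylogarithmic scale, and (c) the very end of the process, where $H(t)\le\tfrac1{2k}B(t)$ forces $H(t)=0$ and one must rule out a stray heavy vertex surviving until only a handful of balls remain. This is the substance of the ``deep dive into hypergraph peeling'' and carries essentially all of the technical weight of the proposition.
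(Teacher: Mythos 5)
Your overall architecture matches the paper's: for the early phase, concentration of the load profile around a deterministic trajectory whose light-ball density is bounded below; for the late phase, the rate comparison between $H$ (decaying at rate $2$, controlled by coupling to independent per-ball death clocks, using that active removals only ever touch light balls) and $B$ (decaying at rate $\frac{k}{k-1}<2$, controlled by a pure-death-type bound), with $k\ge 3$ opening exactly the window the paper exploits. However, there is a genuine flaw in your part~(i): the claimed fluid limit is wrong. You assert that the surviving load profile stays Poisson, with $B(t)/n\to\mu(t)$, $L(t)/n\to\mu(t)\e^{-\mu(t)}$ and $H(t)/n\to\mu(t)(1-\e^{-\mu(t)})$. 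This cannot be right, because the active removals deplete \emph{only} light balls, so degree-$1$ vertices are emptied faster than deaths alone would predict and the profile is not Poisson in a single parameter. The correct limits (which the paper computes) are $B(t)/n\to ck\,\e^{-\frac{k}{k-1}t}$, while $H(t)/n\to ck\,\e^{-t}\bigl(1-\e^{-ck\e^{-t}}\bigr)$ — note the two use \emph{different} exponential rates, precisely because $H$ ignores active removals while $B$ does not. The error matters: with your formula, $\ell(t)=\mu(t)\e^{-\mu(t)}>0$ for every $c$, so your argument would "prove" part~(i) even above the peeling threshold, where it is false. The subthreshold hypothesis has to enter through the positivity of the correct difference $B^*-H^*$ on $(0,1]$, which is exactly equivalent to $c<\cpk$ via $\cpk=\min_{\lambda\ge 0}\frac{\lambda}{k(1-\e^{-\lambda})^{k-1}}$; the paper proves this (its Lemma on heavy balls in early rounds), whereas your appeal to "this is what $\cpk$ encodes" is circular given that your limit functions contradict it.

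Two smaller points on part~(ii). First, your whp lower bound $B(t)=\Omega\bigl(n\mu_T\e^{-\frac{k}{k-1}(t-T)}\bigr)$ can only hold while this quantity is polynomially large, so you must stop it at some $t\approx \alpha\ln n$ with $\alpha<\frac{k-1}{k}$; your proposed patch for later times (union bound over heavy vertices, each becoming light within $O(\log n)$) has an unquantified constant in the $O(\log n)$ and as stated need not beat $\frac{k-1}{k}\ln n$. The fix is already in your own toolkit: Markov on $\mathbb{E}[H(t)]=O(n\e^{-2(t-T)})$ at $t-T=\frac35\ln n$ gives $H\equiv 0$ whp while $B$ is still $\Omega(n^{1/10})$ there (this needs $\frac{k}{k-1}\cdot\frac35<1$, i.e.\ $k\ge 3$), which is exactly how the paper closes the gap. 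Second, for intermediate $t$ the mean of $H(t)$ is too small for multiplicative concentration, so the uniform bound must be stated with slack (the paper uses $H(t)=O(m\e^{-(2-2\gamma)t})$ whp via Markov for larger $t$ and McDiarmid conditioned on the number of surviving-by-lifetime balls for smaller $t$); your Bernstein-on-a-grid plan can be made to work but should be formulated with this additive/\,$\e^{\gamma t}$ slack rather than as concentration around the mean. With the corrected limit functions in part~(i) and these repairs, your outline is essentially the paper's proof; the paper's own route to the early phase avoids the differential-equation method entirely by computing $\mathbb{E}[B(t)]$ and $\mathbb{E}[H(t)]$ exactly through the two couplings and then applying Hoeffding and McDiarmid uniformly over anchor points.
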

The rest of this section is devoted to proving \cref{prop:ball-counts} and is quite technical. The author would recommend skipping to \cref{sec:bounding-pn} on a first read.



\subsection{The expectation of \texorpdfstring{$\bm{B(t)}$}{B(t)} and \texorpdfstring{$\bm{H(t)}$}{H(t)}}

In this subsection we rectify the annoying fact that $B(t)$ and $H(t)$ are not well-defined for $t > τ$, we link $B(t)$ to a pure death process and we determine the expectation of $B(t)$ and $H(t)$ up to constant additive errors.

\paragraph{Extending $\bm{B(t)}$ and $\bm{H(t)}$ beyond $\bm{τ}$.}
When reasoning about $B(t)$ and $H(t)$ individually, we may use the following definitions.
\begin{description}
    •[$B'(t)$:] We begin with $km$ balls with $\Exp(1)$-distributed lifetimes and repeat the following two steps until no balls remain: \lipicsLabel{(i)} remove a ball uniformly at random \lipicsLabel{(ii)} wait for $k-1$ balls to die (and remove them). Define $B'(t)$ as the number of balls remaining at time $t$.\\
    Note: No reference to vertices is made. In contrast, the peeling process insisted that the ball removed in \lipicsLabel{(i)} is light.
    •[$H'(t)$:] We have $km$ balls that are randomly assigned to $n$ vertices, that have $\Exp(1)$ distributed lifetimes, and that are removed on death.
    Define $H'(t)$ to be the number of balls that are not alone at their vertex at time $t$.
    \\
    Note: The peeling process insisted that a light ball is removed in each of its rounds. Here, we are indifferent about the removal of such lonely balls as these are not counted in $H'(t)$ anyway.
\end{description}
The peeling process constitutes an obvious three-way coupling between $(B(t),H(t))$, $B'(t)$ and $H'(t)$ with $B(t) = B'(t)$ and $H(t) = H'(t)$ for all $t ∈ [0,τ]$. Now, however, consider the case where the peeling process terminates early with $B(τ) = H(τ) > 0$ because there is no light ball to remove. The processes $B'(t)$ and $H'(t)$ have no such issue and continue naturally until $B'(t) = 0$ and $H'(t) = 0$, respectively.

In the following we assume that $B(t)$ and $H(t)$ refer to the extended processes $B'(t)$ and $H'(t)$, which are coupled to the peeling process for $t ≤ τ$. While the values $B(t)$ and $H(t)$ are \emph{meaningful} only for $t ≤ τ$, they are now \emph{well-defined} for all $t ∈ ℝ_{≥0}$.

\begin{lemma}[Purifying $B(t)$.]
  \label{lem:coupling-of-Bt}
  Let $P(t)$ be a \emph{pure death process} with initially $(k-1)m$ elements and parameter $\frac{k}{k-1}$, i.e.\ elements have independently distributed lifetimes $X₁,…,X_{(k-1)m} \sim \Exp(\frac{k}{k-1})$ and $P(t) := |\{i ∈ [(k-1)m] \mid X_i > t\}|$.

  There is a coupling between $B(t)$, $P(t)$ and a copy $P'(t)$ of $P(t)$ such that
  \[\tfrac{k}{k-1}P(t)-1 ≤ B(t) ≤ \tfrac{k}{k-1}P'(t)+1 \text{ for any $t ≥ 0$}.\]
\end{lemma}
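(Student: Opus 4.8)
The plan is to strip away the ``artificial'' removals in the extended process $B(t)=B'(t)$ and expose a clean pure death process beneath them. Recall that $B'$ runs in $m$ rounds, the $j$-th of which starts (at an instant) by deleting one uniformly random surviving ball and then waits for the next $k-1$ balls to die of old age; thus $km$ balls disappear in total, of which exactly $(k-1)m$ by a genuine $\Exp(1)$ death. The first thing to record is a structural fact: by memorylessness of the exponential and exchangeability of the survivors, after conditioning on the entire history the currently surviving balls always carry i.i.d.\ $\Exp(1)$ residual lifetimes — and, crucially, deleting a \emph{uniformly random} survivor does not disturb this. Hence the waiting times between consecutive \emph{natural} deaths are independent exponentials whose rates are \emph{deterministic}. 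A short count (round $j$ starts with $km-(j-1)k$ balls, loses one to the artificial deletion, then $k-1$ to death) shows that the $r$-th natural death overall falls in round $j=\lceil r/(k-1)\rceil$ and is governed by rate $\lambda_B(r)=km-r-\lceil r/(k-1)\rceil+1$.

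\textbf{Comparing with the pure death process.} The $r$-th death of $P$ has rate $\lambda_P(r)=\tfrac{k}{k-1}\bigl((k-1)m-r+1\bigr)=km-r-\tfrac{r-k}{k-1}$. An elementary manipulation gives $\lambda_P(r)-\tfrac{k}{k-1}<\lambda_B(r)<\lambda_P(r)$; since $\lambda_P(r+1)=\lambda_P(r)-\tfrac{k}{k-1}$, this reads $\lambda_P(r+1)<\lambda_B(r)<\lambda_P(r)$, i.e.\ the $r$-th natural death of $B$ fires at a rate squeezed between the $r$-th and $(r{+}1)$-th death rates of $P$. Representing each exponential waiting time as $-\ln(U)/\lambda$ with shared uniforms then produces, on one probability space, the process $B$ together with a copy $P$ and a copy $P'$ of the pure death process such that the $r$-th natural death time $T^B_r$ of $B$ satisfies $T^P_r\le T^B_r\le T^{P'}_{r+1}$ for all relevant $r$ (for the right inequality couple the $r$-th natural death of $B$ with the $(r{+}1)$-th death of $P'$ via $\lambda_B(r)>\lambda_{P'}(r+1)$, and drop $P'$'s first, harmless, waiting time; the last index is trivial since $B$ is then exhausted).

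\textbf{Back to ball counts.} Finally I would translate the time bounds into counts. Fix $t\ge 0$ and let $r(t)$ and $a(t)$ be the numbers of natural deaths and of artificial removals that have occurred by time $t$; the round structure forces $\tfrac{r(t)}{k-1}\le a(t)\le \tfrac{r(t)}{k-1}+1$ and $B(t)=km-r(t)-a(t)$. From $T^B_r\ge T^P_r$ one gets $r(t)\le (k-1)m-P(t)$, and combining with the two relations above yields $B(t)\ge\tfrac{k}{k-1}P(t)-1$. From $T^B_r\le T^{P'}_{r+1}$ one gets $r(t)\ge (k-1)m-P'(t)-1$, which yields $B(t)\le\tfrac{k}{k-1}P'(t)+1$. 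The genuinely load-bearing point is the structural observation that the uniform-random deletion preserves the i.i.d.-residual-lifetime structure, so what remains after erasing the artificial removals really is a pure death process run at a slightly perturbed schedule of rates. I expect the fussiest part to be the bookkeeping in the last two steps: pinning down $\lambda_B(r)$ and the $a(t)$-versus-$r(t)$ relation exactly, and squeezing the additive error down to $\pm 1$ rather than $\pm O_k(1)$.
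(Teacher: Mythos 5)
Your proposal follows essentially the same route as the paper's proof: it identifies the waiting times between consecutive natural deaths of $B$ as independent exponentials with deterministic rates $km-r-\lceil r/(k-1)\rceil+1$, squeezes these between consecutive rates of the pure death process (the paper's $\lambda_{i+1}\le\lambda_i'\le\lambda_i$), couples the waiting times monotonically with an index shift for the upper bound, and translates the resulting domination of jump times into the stated count bounds. The only caveat is cosmetic: using merely $a(t)\ge r(t)/(k-1)$ your upper bound comes out as $+\tfrac{k}{k-1}$ rather than $+1$ (exactly the bookkeeping you flag); observing that, except at the terminal time, $a(t)=\lfloor r(t)/(k-1)\rfloor+1$ recovers the exact constant, and the lemma is in any case only ever invoked with $\pm\mathcal{O}(1)$ slack.
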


\begin{proof}
  Let $t₁ < t₂ < … < t_{(k-1)m}$ be the times when $P(t)$ jumps and $t₀ = 0$. A death process jumps with a rate proportional to the number of remaining elements, i.e.\ the waiting times $Δ_i := t_i-t_{i-1}$ for $i ∈ [(k-1)m]$ are independent with distribution $Δ_i \sim \Exp(λ_i)$ where $λ_i = \frac{k}{k-1}((k-1)m-i+1)$.

  Similarly, let $0 = t₀' < t₁' < … < t'_{(k-1)m}$ be the times when $B(t)$ jumps. Since every round of the peeling process begins and ends with the removal of a ball, $B(t)$ jumps two steps at $t_i'$ when $i ≡ 0 \pmod{k-1}$, except for $i = 0$ and $i = (k-1)m$ where it jumps by just $1$ step. Defining $Δ_i' := t'_i - t'_{i-1}$ for $i ∈ [(k-1)m]$ gives $Δ_i' \sim \Exp(λ_i')$ with $λ_i' = km-i- ⌊\frac{i-1}{k-1}⌋$. It is easy to check that $λ_i ≥ λ_i'$. In particular we can couple $Δ_i$ and $Δ_i'$ (i.e.\ embed them in a joint probability space) such that $Δ_i ≤ Δ_i'$ for all $i ∈ [(k-1)m]$. This implies $t_i ≤ t_i'$ for all $i ∈ [(k-1)m]$. Using monotonicity of $P(t)$ and the fact that $P(t_i)$ and $B(t_i')$ are fixed we can derive
  \begin{align*}
    \tfrac{k}{k-1}P(t_i') &≤ \tfrac{k}{k-1}P(t_i) = \tfrac{k}{k-1}((k-1)m - i) = km - i - \tfrac{i}{k-1}\\
    &≤ km-i-⌊\tfrac{i}{k-1}⌋ = B(t_i')+1. 
  \end{align*}
  Since this covers all time points when $B(t)$ jumps, $\frac{k}{k-1}P(t)-1 ≤ B(t)$ follows from monotonicity of $B(t)$ and $P(t)$. A similar coupling between $B(t)$ and $P(t)$ (this time coupling $Δ_{i+1}$ and $Δ_i'$ such that $Δ_{i+1} ≥ Δ_i'$) yields the second half of the claim.
\end{proof}

\begin{lemma}[Expectation of $B(t)$ and $H(t)$.]\ \\
  \label{lem:expectation-B-H}
  Let $B^*(p) = ckp^{k/(k-1)}$ and $H^*(p) := ckp(1-\e^{-ckp})$. Then:
  \begin{enumerate}[(i)]
        • \label{it:EB}  $∀t ≥ 0: 𝔼[B(t)]\, = n·B^*(\e^{-t}) \pm 𝒪(1)$.
        • \label{it:EH}  $∀t ≥ 0: 𝔼[H(t)]   = n·H^*(\e^{-t}) \pm 𝒪(1)$.
  \end{enumerate}
\end{lemma}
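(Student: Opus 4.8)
The plan is to prove the two parts separately and to work throughout with the extended processes $B(t),H(t)$ defined above, writing $c=\tfrac{m}{n}$ so that the asserted values are $n\,B^*(\e^{-t})=km\,\e^{-kt/(k-1)}$ and $n\,H^*(\e^{-t})=km\,\e^{-t}\bigl(1-\e^{-ck\e^{-t}}\bigr)$ with $ck=\tfrac{km}{n}$. For part \itemRef{it:EB} I would simply invoke \cref{lem:coupling-of-Bt}: taking expectations in its sandwich $\tfrac{k}{k-1}P(t)-1\le B(t)\le\tfrac{k}{k-1}P'(t)+1$ and using that the copy $P'(t)$ is distributed as $P(t)$ gives $\mathbb{E}[B(t)]=\tfrac{k}{k-1}\mathbb{E}[P(t)]\pm1$. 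Since $P(t)$ counts the survivors at time $t$ among $(k-1)m$ independent $\Exp(\tfrac{k}{k-1})$-lifetimes, $\mathbb{E}[P(t)]=(k-1)m\,\e^{-kt/(k-1)}$, hence $\mathbb{E}[B(t)]=km\,\e^{-kt/(k-1)}\pm1=n\,B^*(\e^{-t})\pm\mathcal{O}(1)$, uniformly in $t\ge0$.

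For part \itemRef{it:EH} I would compute $\mathbb{E}[H(t)]$ by linearity of expectation over the $km$ balls. A fixed ball is alive at time $t$ with probability $\e^{-t}$; conditioned on it being alive and lying at some vertex $v$, each of the other $km-1$ balls independently lies at $v$ and is still alive at time $t$ with probability $\tfrac{\e^{-t}}{n}$, so the fixed ball is heavy with probability $\e^{-t}\bigl(1-(1-\tfrac{\e^{-t}}{n})^{km-1}\bigr)$, and summing gives
\[
  \mathbb{E}[H(t)]=km\,\e^{-t}\Bigl(1-\bigl(1-\tfrac{\e^{-t}}{n}\bigr)^{km-1}\Bigr).
\]
It then remains to check that replacing $\bigl(1-\tfrac{\e^{-t}}{n}\bigr)^{km-1}$ by $\e^{-ck\e^{-t}}$ changes the right-hand side by only $\mathcal{O}(1)$. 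Writing $q=\tfrac{\e^{-t}}{n}\le\tfrac1n$ and $\ln(1-q)=-q+\mathcal{O}(q^2)$, one obtains $(km-1)\ln(1-q)=-\tfrac{km}{n}\e^{-t}+\mathcal{O}(q)+\mathcal{O}(km\,q^2)=-ck\e^{-t}+\mathcal{O}(1/n)$ because $\tfrac{km}{n}=ck=\mathcal{O}(1)$; exponentiating yields $\bigl(1-\tfrac{\e^{-t}}{n}\bigr)^{km-1}=\e^{-ck\e^{-t}}\bigl(1+\mathcal{O}(1/n)\bigr)$, and multiplying the resulting discrepancy by the prefactor $km\,\e^{-t}\le km=\mathcal{O}(n)$ leaves an error of $\mathcal{O}(ck)=\mathcal{O}(1)$, uniformly in $t$. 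This gives $\mathbb{E}[H(t)]=km\,\e^{-t}\bigl(1-\e^{-ck\e^{-t}}\bigr)\pm\mathcal{O}(1)=n\,H^*(\e^{-t})\pm\mathcal{O}(1)$.

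I expect no conceptual obstacle: part \itemRef{it:EB} is a one-line consequence of the already-established coupling, and part \itemRef{it:EH} is a plain first-moment computation. The only point needing a little care is the final estimate in part \itemRef{it:EH}, where the \emph{relative} error $\mathcal{O}(1/n)$ in approximating $(1-\tfrac{\e^{-t}}{n})^{km-1}$ by $\e^{-ck\e^{-t}}$ has to be small enough to survive multiplication by a prefactor that can be of order $n$; this works precisely because $\tfrac{km}{n}=ck$ is a constant, and because keeping the exponent $km-1$ in place of $km$ costs only the harmless additive term $\mathcal{O}(q)=\mathcal{O}(1/n)$.
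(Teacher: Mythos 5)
Your proposal is correct and follows essentially the same route as the paper: part (i) is the identical appeal to the coupling with the pure death process $P(t)$, and part (ii) is the same first-moment computation arriving at the same exact expression $km\,\e^{-t}\bigl(1-(1-\tfrac{\e^{-t}}{n})^{km-1}\bigr)$, the only cosmetic difference being that the paper groups by vertices (writing $\mathbb{E}[X\mathds{1}_{X\ge 2}]=\mathbb{E}[X]-\Pr[X=1]$ with $X\sim\Bin(km,\tfrac{1}{n}\e^{-t})$) while you group by balls, and it handles the final approximation via $(1-\tfrac1x)^x=\tfrac1\e\pm\mathcal{O}(\tfrac1x)$ rather than your logarithm expansion. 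Your explicit bookkeeping of the relative error $\mathcal{O}(1/n)$ against the prefactor $km\,\e^{-t}=\mathcal{O}(n)$ is exactly the point that needs care, and it is handled correctly.
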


\begin{proof}
  \begin{enumerate}[(i)]
    • Using \cref{lem:coupling-of-Bt} and the definition of the pure death process $P(t)$ yields
    \begin{align*}
        𝔼[B(t)] &= 𝔼[\tfrac{k}{k-1}P(t)] \pm 𝒪(1) = \tfrac{k}{k-1}(k-1)m·\Pr[\Exp(\tfrac{k}{k-1}) > t] \pm 𝒪(1)\\
        &= km\e^{-t·\frac{k}{k-1}} \pm 𝒪(1) = n·ck(\e^{-t})^{\frac{k}{k-1}} = n·B^*(\e^{-t}) \pm 𝒪(1).
    \end{align*}
    • Consider any fixed vertex $v$ and the number $X$ of balls at $v$ with lifetime exceeding $t$. The contribution of the balls at $v$ to $H(t)$ is $X·𝟙_{X ≥ 2}$ and the distribution of $X$ is $\Bin(km,\frac{1}{n}\e^{-t})$ since each of the $km$ balls is assigned to $v$ with probability $\frac{1}{n}$ and has lifetime $> t$ with probability $\e^{-t}$. In the following we use $(1-\frac 1x)^x = \frac 1\e \pm 𝒪(\frac{1}{x})$, which can be derived from $(1+\frac{1}{x})^x < \e < (1+\frac{1}{x})^{x+1}$.
    \begin{align*}
        𝔼[H(t)] &= n·𝔼[X·𝟙_{X ≥ 2}] = n(𝔼[X]-\Pr[X = 1])\\
        &= n·\big(km·\tfrac{1}{n} \e^{-t}-km·\tfrac{1}{n} \e^{-t} · (1-\tfrac{1}{n}\e^{-t})^{km-1}\big)\\
        &= n·ck\e^{-t}\big(1-(1-\tfrac{1}{n\e^{t}})^{n\e^{t}·\frac{km-1}{n\e^{t}}}\big)
        = n·ck\e^{-t}\big(1-(1/e \pm 𝒪(\tfrac{1}{n\e^{t}}) )^{\frac{km-1}{n\e^{t}}}\big)\\
        &= n·ck\e^{-t}\big(1-\e^{-ck\e^{-t}}\big)\pm 𝒪(1) = n·H^*(\e^{-t}) \pm 𝒪(1)\qedhere
    \end{align*}
  \end{enumerate}
\end{proof}

\subsection{Concentration bounds for the early rounds of peeling}
\label{sec:early-rounds-new}

We now apply McDiarmid's inequality (restated in \cref{lem:std-conc}) to show that $B(t) ≈ n·B^*(\e^{-t})$ and $H(t) ≈ n·H^*(\e^{-t})$ whp (\cref{lem:concentration-early}). Below the peeling threshold we have $B^*(p) > H^*(p)$ for all $p ∈ (0,1]$ (\cref{lem:heavy-balls-in-early-rounds}), which then implies that $L(t) = B(t)-H(t) = Ω(n)$ for all $t ∈ [0,t₀]$ whp when $t₀$ is a constant. This establishes \cref{prop:ball-counts} \itemRef{it:many-light-balls}.

\begin{lemma}[Hoeffding + McDiarmid]
    \label{lem:std-conc}
    The following bound \itemRef{it:hoeffding} is known as Hoeffding's inequality and its generalisation \itemRef{it:mcdiarmid} as McDiarmid's inequality \cite{D:McDiarmid:1989}.
    \begin{enumerate}[(i)]
        • \label{it:hoeffding} Let $X₁,…,X_n$ be independent Bernoulli random variables. Then
        \[ \textstyle \Pr\Big[ \big|\sum_i X_i - 𝔼\big[\sum_i X_i\big]\big| ≥ s\Big] = 2\exp(-2s²/n).\]
        • \label{it:mcdiarmid} Let $c ∈ ℝ^+$ and $f$ a real-valued function with $n$ parameters such that changing any single parameter can change the function value by at most $c$. Moreover, let $X₁,…,X_n$ be independent random variables that are valid parameters for $f$. Then
        \[ \Pr\Big[ \big|f(X₁,…,X_n) - 𝔼[f(X₁,…,X_n)]\big| ≥ s\Big] = 2\exp(-2s²/(c²n)).\]
    \end{enumerate}
\end{lemma}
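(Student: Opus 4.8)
The plan is to obtain both inequalities from the Azuma--Hoeffding martingale bound, with \itemRef{it:hoeffding} falling out as the special case of \itemRef{it:mcdiarmid} given by $f(x_1,\dots,x_n)=\sum_i x_i$: a single Bernoulli coordinate moves this sum by at most $c=1$, so \itemRef{it:mcdiarmid} with $c=1$ reproduces the bound $2\exp(-2s^2/n)$. Hence all of the content sits in \itemRef{it:mcdiarmid}.

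For \itemRef{it:mcdiarmid} I would introduce the Doob martingale of $f$ along the revealed coordinates: set $Y_0=\mathbb{E}[f(X_1,\dots,X_n)]$ and, for $1\le j\le n$, $Y_j=\mathbb{E}[f(X_1,\dots,X_n)\mid X_1,\dots,X_j]$, so that $Y_n=f(X_1,\dots,X_n)$ and $(Y_j)_j$ is a martingale with respect to the filtration generated by $X_1,\dots,X_j$. The crucial step is the increment bound: conditioned on $X_1,\dots,X_{j-1}$, the value $Y_j$ is a function of $X_j$ alone, and because changing only the $j$-th argument of $f$ changes it by at most $c$ --- a property preserved under averaging over the still-unrevealed coordinates $X_{j+1},\dots,X_n$, by independence --- the conditional support of $Y_j$, and hence of the difference $Y_j-Y_{j-1}$, is an interval of length at most $c$, with conditional mean $0$.

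I would then invoke the sharp form of Hoeffding's lemma: a mean-zero random variable supported in an interval of length $c$ has moment generating function at most $\exp(\lambda^2 c^2/8)$. Applying this to each increment conditionally and telescoping gives $\mathbb{E}[e^{\lambda(Y_n-Y_0)}]\le\exp(n\lambda^2 c^2/8)$; a Markov bound with the optimal choice $\lambda=4s/(nc^2)$ yields $\Pr[Y_n-Y_0\ge s]\le\exp(-2s^2/(nc^2))$, and running the same argument on $-f$ and summing the two tail bounds gives the claimed two-sided estimate $2\exp(-2s^2/(c^2n))$. The only step needing care is the increment bound, i.e.\ that the bounded-differences property of $f$ survives integration over the unrevealed variables, which is precisely where independence of $X_1,\dots,X_n$ is used; everything else is routine. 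Since the statement is entirely classical, in the write-up I would compress this to a sentence with a pointer to \cite{D:McDiarmid:1989} rather than reproducing the martingale argument in full.
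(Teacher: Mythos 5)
Your sketch is the standard and correct proof: the Doob martingale of $f$ along revealed coordinates, the bounded-differences property surviving conditional averaging by independence, Hoeffding's lemma on each increment, telescoping, and optimising the Chernoff parameter, with part (i) as the special case $f=\sum_i x_i$, $c=1$ (the displayed ``$=$'' in the lemma should of course be read as ``$\le$''). The paper itself gives no proof at all and simply cites McDiarmid, which is exactly what you say you would do in the write-up, so there is nothing to compare beyond that.
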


\begin{lemma}
  \label{lem:concentration-early}
  With $B^*(p)$ and $H^*(p)$ as in \cref{lem:expectation-B-H} we have
  \begin{enumerate}[(i)]
      • \label{it:CB}  $∀t,a ≥ 0: \Pr\big[|B(t)-n·B^*(\e^{-t})| ≥ a \sqrt{n}\big] = \exp(-Ω(a²))$.
      • \label{it:CH}  $∀t,a ≥ 0: \Pr\big[|H(t)-n·H^*(\e^{-t})| ≥ a \sqrt{n}\big] = \exp(-Ω(a²))$.
      • \label{it:CB2} $\sup_{t ≥ 0} |B(t)-n·B^*(\e^{-t})| = o(n)$ whp.
      • \label{it:CH2} $\sup_{t ≥ 0} |H(t)-n·H^*(\e^{-t})| = o(n)$ whp.
  \end{enumerate}
\end{lemma}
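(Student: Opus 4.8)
The plan is to prove \itemRef{it:CB} and \itemRef{it:CH} by exhibiting $B(t)$ and $H(t)$ as functions of $km$ independent random variables with bounded individual influence, then invoking \cref{lem:std-conc}\itemRef{it:mcdiarmid}, and finally combine \itemRef{it:CB}/\itemRef{it:CH} with the expectation estimates of \cref{lem:expectation-B-H} and a union bound over a discretised time axis to obtain the uniform bounds \itemRef{it:CB2}/\itemRef{it:CH2}. The natural choice of independent parameters is one per ball: for $H(t)$ (using the extended process $H'(t)$), the pair consisting of the ball's vertex assignment and its $\Exp(1)$ lifetime; for $B(t)$ (using $B'(t)$), the lifetime together with whatever randomness governs the "remove a ball uniformly at random" steps. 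The key structural observation is that changing one ball's data can alter $H'(t)$ by at most $2$ (it affects whether that ball is counted and whether at most one other ball at the old/new vertex crosses the $X\ge 2$ threshold — in fact a careful count gives a small constant $c$), so McDiarmid with $c = O(1)$ gives $\Pr[|H(t) - \mathbb E[H(t)]| \ge s] \le 2\exp(-\Omega(s^2/n))$; taking $s = a\sqrt n$ and folding the $\pm O(1)$ error from \cref{lem:expectation-B-H}\itemRef{it:EH} into $a\sqrt n$ (valid once $a$ exceeds a constant, and trivial otherwise) yields \itemRef{it:CH}. For \itemRef{it:CB}, rather than fighting with the bounded-difference constant of the peeling dynamics directly, I would instead transfer concentration through the coupling of \cref{lem:coupling-of-Bt}: $B(t)$ is sandwiched between $\tfrac{k}{k-1}P(t)\pm 1$ and $\tfrac{k}{k-1}P'(t)\pm 1$ where $P(t),P'(t)$ are pure death processes, i.e.\ sums of $(k-1)m$ independent indicators $\mathds 1[X_i > t]$, to which Hoeffding \itemRef{it:hoeffding} applies directly; this gives concentration of $P(t)$ around its mean, and \cref{lem:expectation-B-H}\itemRef{it:EB} identifies that mean with $n\cdot B^*(\e^{-t})$ up to $O(1)$.

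For the uniform statements \itemRef{it:CB2} and \itemRef{it:CH2}, I would first note that both $B(t)$ and $n\cdot B^*(\e^{-t})$ are monotone non-increasing in $t$ (and likewise $H$ can be controlled via $B$ and $L$), and that $B^*(\e^{-t}) \to 0$ as $t \to \infty$. Fix $\eta > 0$. Choose a constant $T = T(\eta)$ with $n\cdot B^*(\e^{-T}) \le \eta n/2$, so that for $t \ge T$ we trivially have $0 \le B(t) \le B(T)$ and it suffices to control $B(T)$; and discretise $[0,T]$ into $O(T/\eta)$ sub-intervals on each of which $n\cdot B^*(\e^{-t})$ varies by at most $\eta n /4$. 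On each grid point apply \itemRef{it:CB} with $a = a(n) \to \infty$ slowly enough that $\exp(-\Omega(a^2))$ beats the $O(T/\eta) = O(1)$ union bound, e.g.\ $a = \log n$; monotonicity then sandwiches $B(t)$ for all $t$ in a sub-interval between the values at its endpoints, each within $\eta n$ of $n\cdot B^*(\e^{-t})$ whp. Since $\eta$ was arbitrary this gives $\sup_t |B(t) - n B^*(\e^{-t})| = o(n)$ whp, and the same argument (using monotonicity of the Binomial-threshold count in the lifetimes, or simply $H(t) = B(t) - L(t)$ together with an analogous treatment of $L$, or directly the monotonicity of $H^*$) yields \itemRef{it:CH2}.

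The main obstacle I anticipate is pinning down the bounded-difference constant for the processes as genuinely defined — in particular making the reduction for $B(t)$ clean. The peeling dynamics are adaptive (which ball is removed "uniformly at random" depends on the surviving set, which depends on earlier lifetimes), so a naive "one parameter per ball" encoding does not obviously have bounded influence, which is precisely why routing through the non-adaptive death-process coupling of \cref{lem:coupling-of-Bt} is the right move. A secondary technical point is handling the $\pm O(1)$ slack from \cref{lem:expectation-B-H} uniformly and the degenerate small-$a$ regime in \itemRef{it:CB}/\itemRef{it:CH}, but these are absorbed by noting the claimed bounds are vacuous unless $a$ exceeds a suitable constant.
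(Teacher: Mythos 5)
Your proposal matches the paper's proof essentially step for step: \itemRef{it:CB} is handled by transferring concentration through the coupling of \cref{lem:coupling-of-Bt} to the pure death process and applying Hoeffding, \itemRef{it:CH} by McDiarmid's inequality applied to the $km$ independent (vertex, lifetime) pairs with bounded-difference constant $2$, and the uniform statements by pointwise bounds, a union bound over a discretised time axis, and monotonicity of $B$, $H$, $B^*(\e^{-t})$, $H^*(\e^{-t})$. The only minor difference is in \itemRef{it:CB2}/\itemRef{it:CH2}: your fixed-$\eta$ grid yields ``$\sup_t|B(t)-nB^*(\e^{-t})|\le \eta n$ whp for every constant $\eta>0$'', which requires letting $\eta=\eta(n)\to 0$ slowly (or the paper's finer grid of $\mathcal{O}(\sqrt n)$ points where $nB^*(\e^{-t})$ crosses multiples of $\sqrt n$, giving the explicit bound $n^{2/3}+\sqrt n$) in order to literally produce a single $o(n)$ bound holding whp --- a routine repair that does not affect correctness.
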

\begin{proof}\begin{enumerate}[(i)]
    • By \cref{lem:expectation-B-H} \itemRef{it:EB} and \cref{lem:coupling-of-Bt} it suffices that we show $\Pr\big[|P(t)-𝔼[P(t)]| ≥ a \sqrt{n}\big] = \exp(-Ω(a²))$. Since $P(t) \sim \Bin((k-1)m,\e^{-t\frac{k}{k-1}})$ is binomially distributed and hence the sum of $(k-1)m = Θ(n)$ independent Bernoulli random variables we can use Hoeffding's inequality (\cref{lem:std-conc} \itemRef{it:hoeffding}) with $s = a\sqrt{n}$.
    • Let $X_i$ for $i ∈ [km]$ be the pair of the vertex and the lifetime assigned to the $i$-th ball. Note that $H(t)$ can be computed from $X₁,…,X_{km}$. Changing a single $X_i$ can affect $H(t)$ by at most $\pm 2$: It can cause a combination of the effects $\{$ ball $i$ now contributes to $H(t)$ and did not before, ball $i$ no longer contributes to $H(t)$ and used to before, a single ball that used to be co-located with ball $i$ at time $t$ becomes light, a single ball that is now co-located with ball $i$ at time $t$ becomes heavy $\}$.
    Since $X₁,…,X_{km}$ are independent we can use McDiarmid's inequality (\cref{lem:std-conc} \itemRef{it:mcdiarmid}) to get
    \begin{align*}
      &\Pr\Big[\big|H(t) - n·H^*(e^{-t})\big| ≥ a\sqrt{n}\Big]\\
      \stackrel{\text{Lem. \ref{lem:expectation-B-H} \itemRef{it:EH}}}{≤}
      &\Pr\Big[\big|H(t) - 𝔼[H(t)]\big| ≥ a\sqrt{n}- 𝒪(1)\Big] \stackrel{\textrm{Mc.D.}}{=} \exp(-Ω(a²)).
    \end{align*}
    • Let $0 ≤ t₁ ≤ … ≤ t_j$ be all $j = 𝒪(\sqrt{n})$ points in time $t$ where $n·B^*(\e^{-t})$ is an integer multiple of $\sqrt{n}$. For each $t_i$ individually we have $\Pr\big[|B(t_i)-n·B^*(\e^{-t_i})| ≥ n^{2/3}\big] = \exp(-Ω(n^{1/3}))$ by \itemRef{it:CB}. By union bound we have $|B(t_i)-n·B^*(\e^{-t_i})| ≤ n^{2/3}$ \emph{simultaneously} for all $i ∈ [j]$ with probability $1-\exp(-Ω(n^{1/3}))·𝒪(\sqrt{n})$, hence whp. By choice of the time points and by monotonicity of $B(t)$ and $B^*(e^{-t})$ we have $|B(t)-n·B^*(\e^{-t})| ≤ n^{2/3} + \sqrt{n}$ simultaneously for all $t ∈ ℝ_{≥ 0}$. This implies the claim.
    • We can argue analogously to \itemRef{it:CB2}.\qedhere
\end{enumerate}\end{proof}

\begin{lemma}
  \label{lem:heavy-balls-in-early-rounds}
    Let $c = \cpk - ε$. Then for any $t₀ > 0$ there exists $δ = δ(ε,t₀) > 0$ such that
    \[
      \inf_{t ∈ [0,t₀]} (B^*(\e^{-t}) - H^*(\e^{-t})) ≥ 2δ.\]
\end{lemma}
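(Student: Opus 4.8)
The plan is to reduce the statement to an elementary fact about the two explicit functions $B^*$ and $H^*$ of \cref{lem:expectation-B-H} together with the standard characterisation of the peeling threshold. The first step is the algebraic identity, valid for $p ∈ (0,1]$,
\[
  B^*(p) - H^*(p) = ckp^{k/(k-1)} - ckp\bigl(1 - \e^{-ckp}\bigr) = ckp\bigl(p^{1/(k-1)} - 1 + \e^{-ckp}\bigr),
\]
so that, writing $h_c(p) := p^{1/(k-1)} - 1 + \e^{-ckp}$ and using $ckp > 0$, it suffices to prove $h_c(p) > 0$ for all $p ∈ (0,1]$ when $c = \cpk - ε$, and then to upgrade this pointwise positivity to a uniform bound on a compact subinterval.

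For the positivity I would use that $\frac{\partial h_c(p)}{\partial c} = -kp\,\e^{-ckp} < 0$ for every fixed $p ∈ (0,1]$, so $c ↦ h_c(p)$ is strictly decreasing. Solving $h_c(p) = 0$ for $c$ gives the critical density $c_0(p) := -\ln(1 - p^{1/(k-1)})/(kp)$, which is finite and positive for every $p ∈ (0,1)$, and the peeling threshold is $\cpk = \inf_{p ∈ (0,1)} c_0(p)$ — this is the characterisation underlying the numbers in \cref{tab:thresholds}, and alternatively one may simply invoke from the peeling analysis the fact that $c < \cpk$ forces $B^* > H^*$ on all of $(0,1]$. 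Granting this, for $c = \cpk - ε$ and any $p ∈ (0,1)$ we have $c < \cpk ≤ c_0(p)$, so strict monotonicity in $c$ gives $h_c(p) > h_{c_0(p)}(p) = 0$; combined with $h_c(1) = \e^{-ck} > 0$ this yields $B^*(p) - H^*(p) > 0$ for every $p ∈ (0,1]$.

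Finally, since $p ↦ B^*(p) - H^*(p)$ is continuous on the compact interval $[\e^{-t_0},1] ⊆ (0,1]$, it attains a minimum there, which is strictly positive by the previous step; taking $2δ$ to be this minimum makes $δ$ a positive constant depending only on $k$, $ε$ and $t_0$, and substituting $p = \e^{-t}$ gives $\inf_{t ∈ [0,t_0]}\bigl(B^*(\e^{-t}) - H^*(\e^{-t})\bigr) = \min_{p ∈ [\e^{-t_0},1]}\bigl(B^*(p) - H^*(p)\bigr) ≥ 2δ$, as claimed. The only non-routine ingredient is the identification of $\cpk$ with $\inf_{p} c_0(p)$, i.e.\ checking that the functions $B^*,H^*$ distilled from the continuous-time peeling process really have the peeling threshold as the exact cutoff for the domination $B^* > H^*$ on $(0,1]$; everything else is one line of algebra, the sign of a derivative, and compactness. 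It is worth noting why finiteness of $t_0$ is essential: $B^*(p) - H^*(p) → 0$ as $p → 0^+$ (here $1/(k-1) < 1$ makes the leading term $p^{1/(k-1)}$ positive but vanishing), so no positive lower bound is available on all of $(0,1]$, only on intervals bounded away from $0$.
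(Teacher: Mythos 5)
Your proof is correct and follows essentially the same route as the paper: both reduce the claim to the non-existence of a nontrivial solution of $B^*(p)=H^*(p)$ when $c<\cpk$, identify the resulting critical value with the known peeling-threshold formula (your $\inf_{p\in(0,1)}c_0(p)$ equals the paper's $\min_{\lambda\ge 0}\frac{\lambda}{k(1-\e^{-\lambda})^{k-1}}$ under the substitution $\lambda=ckp$, i.e.\ $p=(1-\e^{-\lambda})^{k-1}$), and then conclude by continuity and compactness of $[\e^{-t_0},1]$. The only cosmetic difference is that you get pointwise positivity from monotonicity of $h_c(p)$ in $c$, whereas the paper combines $B^*(1)>H^*(1)$ with the absence of zero crossings.
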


\begin{proof}
    Consider solutions to $H^*(p) = B^*(p)$. Intuitively, they indicate points in time when the peeling process cannot continue because all remaining balls are heavy. We ignore the \emph{trivial solution} $p = 0$ when $H^*(p) = B^*(p) = 0$ (reached on successful termination). Using the substitution $λ = ckp$ we get
    \begin{align*}
        H^*(p) = B^*(p) &⇔ ckp(1-\e^{-ckp}) = ckp^{\frac{k}{k-1}} ⇔ (1-\e^{-ckp})^{k-1} = p\\
        &⇔ (1-\e^{-λ})^{k-1} = \frac{λ}{ck} ⇔ c = \frac{λ}{k(1-\e^{-λ})^{k-1}}
    \end{align*}
    Thus $\cpk := \min_{λ ≥ 0} \frac{λ}{k(1-\e^{-λ})^{k-1}}$ is the smallest value of $c$ for which a non-trivial solution exists. We have thereby recovered the formula for the peeling threshold (see \cite{Molloy05:Cores-in-random-hypergraphs} for the original reference and \cite[Section 2]{DGMMPR:Tight:2010} for a more accessible explanation).
    
    The claim is now easily proved. For $c = \cpk-ε$ we have, by definition, only the trivial solution $p = 0$ and hence $B^*(\e^{-t}) ≠ H^*(\e^{-t})$ for any $t ≥ 0$. The claim follows from $B^*(1) > H^*(1)$, continuity of $B^*$ and $H^*$ and compactness of $[0,t₀]$.
\end{proof}
  
\noindent We now show that throughout all early rounds of peeling there are $Ω(n)$ light balls.
\begin{proof}[Proof of \cref{prop:ball-counts} \itemRef{it:many-light-balls}]
  Let $c = \cpk - ε$, let $t₀ ≥ 0$ and let $δ = δ(ε,t₀)$ be the constant from \cref{lem:heavy-balls-in-early-rounds}. Whp we have
    \def\range{\!{0≤t≤t₀}\!}
    \begin{align*}
      &\min_{\range} L(t)
      \stackrel{\text{def}}{=}
        \min_{\range} B(t)-H(t)\\
      ≥ &\min_{\range} \Big(
          \big(B(t) - nB^*(\e^{-t})\big)
          - \big(H(t) - nH^*(\e^{-t})\big)
          + n\big(B^*(\e^{-t}) - H^*(\e^{-t})\big)
        \Big)\\
      ≥ &
         -\underbrace{\sup_{\range} |B(t) - nB^*(\e^{-t})|}%
         _{o(n) \text{ by \cref{lem:concentration-early} \itemRef{it:CB2}}}
         -\underbrace{\sup_{\range} |H(t) - nH^*(\e^{-t})|}%
         _{o(n) \text{ by \cref{lem:concentration-early} \itemRef{it:CH2}}}
          + n· \underbrace{\inf_{\range} \big(B^*(\e^{-t}) - H^*(\e^{-t})\big)}%
         _{≥2δ \text{ by \cref{lem:heavy-balls-in-early-rounds} }}\\
      ≥ &-o(n)-o(n)+2δn ≥ δn.
    \end{align*}
    Since $L(τ) = 0$ we also have $τ > t₀$ whp as claimed.
\end{proof}

\subsection{Concentration bounds for the late rounds of peeling}

Before proving \cref{prop:ball-counts} \itemRef{it:few-heavy-balls} we establish further tail bounds on $B(t)$, $H(t)$ as well as $S(t)$, where $S(t)$ is the number of balls with an \emph{assigned lifetime} exceeding $t$. Note that $B(t)$ may be less than $S(t)$ due to balls that are selected for removal before their lifetime expires.

Note that \cref{lem:concentration-early} \itemRef{it:CB2} and \itemRef{it:CH2} are more precise for small $t$ but cease to yield meaningful bounds when fewer than $\sqrt{m}$ balls remain, e.g.\ when $t > \frac 12 \log m$.

\begin{lemma}
  \label{lem:concentration-bounds}
    Let $γ > 0$ be a small constant.
    We have
    \begin{enumerate}[(i)]
        • \label{it:exp-S}
        $𝔼[S(t)] = km\e^{-t}$ for any $t ≥ 0$.
        • \label{it:concen-S}
        $S(t) = 𝒪(m\e^{-(1-γ)t}) ∩ Ω(m\e^{-(1+γ)t})$ for all $t ∈ [0,\frac{9}{10} \ln(m)]$ whp.
        • \label{it:concen-B}
        $B(t) = Ω(P(t)) = Ω(m\e^{-(\frac{k}{k-1}+γ)t})$ for all $t ∈ [0,\frac{3}{5} \ln(m)]$ whp.
        • \label{it:exp-H}
        $𝔼[H(t)] = 𝒪(m\e^{-2t})$ for $t ≥ 0$.
        • \label{it:concen-H0}
        $H(\frac{3}{5}\ln(m)) = 0$ whp.
        • \label{it:concen-H}
        $H(t) = 𝒪(m\e^{-(2-2γ)t})$ for all $t ∈ [0,\frac{3}{5}\ln(m)]$ whp.
    \end{enumerate}
\end{lemma}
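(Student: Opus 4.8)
My plan is to treat each of the six items as a routine consequence of the couplings and expectation computations already in hand, taking them roughly in the order listed so that later bounds can reuse earlier ones. For item~\itemRef{it:exp-S}: each of the $km$ balls has an assigned $\Exp(1)$ lifetime, so $\Pr[\text{ball survives to }t]=\e^{-t}$ and linearity of expectation gives $𝔼[S(t)]=km\e^{-t}$ exactly; no coupling is needed because $S(t)$ ignores the peeling removals entirely. For item~\itemRef{it:concen-S}: $S(t)$ is a sum of $km=Θ(m)$ independent Bernoulli indicators, so Hoeffding (\cref{lem:std-conc}~\itemRef{it:hoeffding}) gives $\Pr[|S(t)-km\e^{-t}|\ge s]=2\exp(-2s^2/(km))$. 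The plan is to apply this at a polynomial-size net of time points $t$ (say spaced $\Theta(1/\sqrt m)$ apart, or more simply at the points where $km\e^{-t}$ crosses successive powers of $(1+\gamma)$), use the monotonicity of $S(t)$ to fill in between net points as in the proof of \cref{lem:concentration-early}~\itemRef{it:CB2}, and take a union bound; the error term $s$ must be chosen as a small power of $m$ times the current expectation so that for $t\le \frac9{10}\ln m$ the expectation $Θ(m\e^{-t})\ge m^{1/10}$ still dominates the additive $\sqrt m\cdot\mathrm{polylog}$ fluctuation, yielding the claimed multiplicative window $[\Omega(m\e^{-(1+\gamma)t}),\,O(m\e^{-(1-\gamma)t})]$.

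For item~\itemRef{it:concen-B}: I would first get the lower bound $B(t)=\Omega(P(t))$ from the pure-death coupling \cref{lem:coupling-of-Bt} (the inequality $B(t)\ge \frac{k}{k-1}P(t)-1$), and then bound $P(t)$ from below. Since $P(t)\sim\Bin((k-1)m,\e^{-\frac{k}{k-1}t})$, the same net-plus-monotonicity-plus-Hoeffding argument as for item~\itemRef{it:concen-S} shows $P(t)=\Omega(m\e^{-(\frac{k}{k-1}+\gamma)t})$ whp throughout $t\le\frac35\ln m$, where the restriction to $\frac35\ln m$ keeps $𝔼[P(t)]=Θ(m\e^{-\frac{k}{k-1}t})$ polynomially large in $m$ (note $\frac{k}{k-1}\cdot\frac35<\frac9{10}$ for all $k\ge3$, which is why this threshold is safe). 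Item~\itemRef{it:exp-H} is immediate from \cref{lem:expectation-B-H}~\itemRef{it:EH}: $𝔼[H(t)]=n\cdot ck\e^{-t}(1-\e^{-ck\e^{-t}})\pm O(1)$, and bounding $1-\e^{-x}\le x$ with $x=ck\e^{-t}$ gives $𝔼[H(t)]\le n\cdot c^2k^2\e^{-2t}+O(1)=O(m\e^{-2t})$ since $n=\Theta(m)$.

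The two genuinely delicate items are \itemRef{it:concen-H0} and \itemRef{it:concen-H}, and I expect \itemRef{it:concen-H0} to be the main obstacle. For \itemRef{it:concen-H}, the natural route is Markov combined with the same discretisation trick: at net time points, $\Pr[H(t)\ge m^{\gamma}\cdot 𝔼[H(t)]]\le m^{-\gamma}$ is too weak for a union bound, so instead I would use McDiarmid on $H(t)$ exactly as in \cref{lem:concentration-early}~\itemRef{it:CH} (bounded differences of $\pm2$) to get $\Pr[|H(t)-𝔼[H(t)]|\ge s]=2\exp(-2s^2/(4km))$, apply it at a polynomial net, and use a monotonicity argument (or a two-sided sandwich between consecutive net points, exploiting that $H(t)$ is nonincreasing and $𝔼[H(t)]$ varies smoothly) to conclude $H(t)=O(m\e^{-2t})+O(\sqrt m\,\mathrm{polylog})=O(m\e^{-(2-2\gamma)t})$ for all $t$ up to the point where $m\e^{-2t}$ drops below $\sqrt m\,\mathrm{polylog}$, i.e.\ up to roughly $\frac14\ln m$; to push all the way to $\frac35\ln m$ one must instead argue that $H$ has actually hit $0$, which is item~\itemRef{it:concen-H0}. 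For \itemRef{it:concen-H0}, the idea is that heavy balls can only disappear and never reappear once no two balls share a vertex, and by item~\itemRef{it:exp-H} we have $𝔼[H(\frac35\ln m)]=O(m\cdot m^{-6/5})=O(m^{-1/5})=o(1)$, so Markov's inequality gives $\Pr[H(\frac35\ln m)\ge 1]=o(1)$, i.e.\ $H(\frac35\ln m)=0$ whp --- the subtlety is making sure the extended process $H'(t)$ is the one being bounded and that $\frac35\ln m\le\tau$ with high probability (so that $H$ is still meaningful there), which follows from item~\itemRef{it:concen-B} since $B(\frac35\ln m)=\Omega(m\e^{-(\frac{k}{k-1}+\gamma)\cdot\frac35\ln m})=m^{\Omega(1)}>0$ whp forces $\tau>\frac35\ln m$. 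I would present \itemRef{it:concen-H0} before \itemRef{it:concen-H} and let the latter cite the former to cover the tail of the time interval where the direct concentration bound has already degraded.
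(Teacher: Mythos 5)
Your items \itemRef{it:exp-S}, \itemRef{it:concen-H0} and the reduction to anchor points are in the spirit of the paper, but three of your steps fail quantitatively, and they are exactly the points where this lemma has to go beyond \cref{lem:concentration-early}. For \itemRef{it:concen-S} and \itemRef{it:concen-B} you apply Hoeffding to the full sum of $km$ (resp.\ $(k-1)m$) indicators, which forces an additive error of order $\sqrt{m}$ (up to polylog) before the tail becomes small; your claim that the expectation $\Theta(m\e^{-t})\ge m^{1/10}$ ``dominates the $\sqrt m$ fluctuation'' is false once $t>\tfrac12\ln m$, so the multiplicative window cannot be reached on the ranges $[0,\tfrac9{10}\ln m]$ and $[0,\tfrac35\ln m]$ this way. (This is precisely why the paper remarks that \cref{lem:concentration-early} becomes useless below $\sqrt m$ remaining balls.) The paper instead argues stepwise: conditionally $S(i+1)\sim\Bin(S(i),1/\e)$, so Hoeffding with only $S(i)$ trials gives a per-step \emph{relative} error $\pm\delta$ with failure probability $\exp(-\Omega(S(i)))$, which stays whp as long as $S(i)=m^{\Omega(1)}$; multiplying over $O(\log m)$ integer steps yields the $(1\pm\gamma)$ in the exponent, and the same chaining handles $P(t)$ for \itemRef{it:concen-B}. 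Secondly, your derivation of \itemRef{it:exp-H} from \cref{lem:expectation-B-H}~\itemRef{it:EH} inherits its $\pm\mathcal{O}(1)$ additive error, and $\mathcal{O}(1)$ is \emph{not} $\mathcal{O}(m\e^{-2t})$ at $t=\tfrac35\ln m$ where $m\e^{-2t}=m^{-1/5}$; since \itemRef{it:concen-H0} is Markov applied to exactly this expectation, your chain \itemRef{it:exp-H}$\Rightarrow$\itemRef{it:concen-H0} breaks. The paper avoids this by the error-free pair count $𝔼[H(t)]\le 2\binom{km}{2}\tfrac1n\e^{-2t}$.

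For \itemRef{it:concen-H} your McDiarmid bound again uses all $km$ coordinates, so it only certifies the claim while $m\e^{-(2-2\gamma)t}\gg\sqrt m$, i.e.\ up to roughly $\tfrac14\ln m$, and your proposed rescue via \itemRef{it:concen-H0} covers only the single time $\tfrac35\ln m$; monotonicity of $H$ gives $H(t)\le H(\tfrac14\ln m)\approx m^{1/2}$ on the middle range, far above the required $m\e^{-(2-2\gamma)t}$ (e.g.\ $m^{\gamma}$ at $t=\tfrac12\ln m$), so the interval $(\tfrac14\ln m,\tfrac35\ln m)$ is simply not covered. Your reason for discarding Markov there is a miscalibrated net: with the constant-spacing anchor-point reduction only $O(\log m)$ time points are needed, and a per-point failure probability $O(m^{-2\gamma/3})$ from Markov is then perfectly adequate — this is exactly how the paper treats $t\in[\tfrac13\ln m,\tfrac35\ln m]$. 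For $t\le\tfrac13\ln m$ the paper also sharpens the concentration step by first conditioning on $S(t)\le s=\mathcal{O}(m\e^{-(1-\gamma)t})$ (using \itemRef{it:concen-S} plus a monotonicity-in-$s$ argument) and applying McDiarmid only to the $s$ surviving balls' vertex choices, giving a tail $\exp(-\Omega(Z^2/s))=\exp(-\Omega(m^{\gamma}))$ rather than your much weaker $\exp(-\Omega(Z^2/(km)))$. So the overall architecture you propose is reasonable, but items \itemRef{it:concen-S}, \itemRef{it:concen-B}, \itemRef{it:exp-H} and \itemRef{it:concen-H} all need the sharper conditional/stepwise arguments to reach the stated time ranges.
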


\begin{proof}
  Note that \itemRef{it:concen-S}, \itemRef{it:concen-B} and \itemRef{it:concen-H} each claim that certain bounds hold whp \emph{simultaneously} for all $t$ in an interval. However, it suffices to show that they hold whp simultaneously for uniformly spaced \emph{anchor points} $t₀ = 0$, $t₁ = Δ$, $t₂ = 2Δ$, … where $Δ > 0$ is a constant. This is because the stated bounds change only by a constant factor in between two consecutive $t_i$. Due to monotonicity of $S(t), B(t)$ and $H(t)$, all bounds that hold for all anchor points also hold up to that constant factor in between the anchor points.

  Since the interval of admissible $t$ has length $𝒪(\log n)$ in all cases it even suffices to focus on any single $t$ as the intersection of $𝒪(\log n)$ events that hold whp still holds whp.
    \begin{enumerate}[(i)]
        • We have $𝔼[S(t)] = km·\Pr[\Exp(1) > t] = km\e^{-t}$ by definition of $S(t)$.
        • Let $i ∈ ℕ₀$. We have $S(i+1) \sim \Bin(S(i),\frac{1}{\e})$ and Hoeffding's inequality (\cref{lem:std-conc} \itemRef{it:hoeffding}) gives ${S(i+1)}/{S(i)} ∈ [\frac 1\e - δ, \frac 1\e + δ]$ with probability $1-\exp(-Ω(S(i)))$ for any constant $δ > 0$. As long as $S(i) = m^{Ω(1)}$ such an event holds whp.
        Note that the largest permitted value $t_{\max} = \frac{9}{10}\ln(m)$ is such that $𝔼[S(t_{\max})] = Ω(m^{\frac{1}{10}})$ so we obtain $S(i) ∈ [km(\frac 1\e-δ)^i,km(\frac 1\e+δ)^i]$ for all $i ∈ [⌈t_{\max}⌉]$ and any constant $δ > 0$ whp by induction. 
        Combined with the preliminary remark this implies the claim where $\pm γ$ in the exponent compensates for $δ$.
        • The claim $B(t) = Ω(P(t))$ follows from the tight link between $B(t)$ and $P(t)$ proved in \cref{lem:coupling-of-Bt}. The proof of $P(t) = Ω(m\e^{-(\frac{k}{k-1}+γ)t})$ is analogous to \itemRef{it:concen-S} since $P(t)$ is a pure death process like $S(t)$ (just with parameter $\frac{k}{k-1}$ instead of $1$). 

        Note that $t_{\max} = \frac{3}{5}\ln(m)$ is again such that $𝔼[P(t_{\max})] = \smash{(k-1)m\e^{-\frac{k}{k-1}\frac{3}{5}\ln(m)}}  = Ω(m^{1-\frac{k}{k-1}·\frac{3}{5}}) = Ω(m^{1-\frac{3}{2}·\frac{3}{5}}) = Ω(m^{\frac{1}{10}})$ is polynomially large in $n$.

        • Consider the $km$ balls before they are assigned to vertices. The probability that a fixed pair of balls will share a vertex at time $t$ is $\frac{1}{n}\e^{-2t}$ because they have to be assigned to the same vertex and both must not have died at time $t$. Defining $H₂(t)$ to be the number of such pairs we have
        \[𝔼[H(t)] ≤ 𝔼[2H₂(t)] = 2\tbinom{km}{2}\tfrac{1}{n}\e^{-2t} = 𝒪(m\e^{-2t}).\]
        • For $t = \frac{3}{5}\ln(m)$ we have $𝔼[H(t)] = 𝒪(m^{1-\frac{6}{5}}) = m^{-1/5}$ by \itemRef{it:exp-H}. We conclude $\Pr[H(t) ≥ 1] ≤ m^{-1/5}$ by Markov's inequality.
        • By the preliminary remark, we can focus on a single $t$. For $t ∈ [\frac{1}{3} \ln(m),\frac{3}{5}\ln(m)]$ we can simply use Markov's inequality:
        \begin{align*}
            \Pr[H(t) > m\e^{-(2-2γ)t}] ≤ \frac{𝔼[H(t)]}{m\e^{-(2-2γ)t}}  = \frac{𝒪(m\e^{-2t})}{m\e^{-(2-2γ)t}}
            = 𝒪(\e^{-2γt}) = 𝒪(m^{-2γ/3}).
        \end{align*}
        Now assume $t ∈ [0,\frac{1}{3}\ln(m)]$. Let $s = 𝒪(m\e^{-(1-γ)t})$ be the upper bound on $S(t)$ that holds whp by \itemRef{it:concen-S}. We have for any $Z = ω(m\e^{-(2-2γ)t})$:
        \begin{align}
            \Pr[H(t) > 2Z] &≤ \Pr[H(t) > 2Z \mid S(t) ≤ s] + \Pr[S(t) > s]\label{eq:boundingHt}\\
            &≤ \Pr[H(t) > 2Z \mid S(t) = s] + m^{-Ω(1)}.\notag
        \end{align}
        In the last step we used that a larger number of balls with assigned lifetimes exceeding $t$ favours a larger number of balls that are not alone at their vertex at time $t$, i.e.\ $\Pr[H(t) > 2Z \mid S(t) = s]$ is monotonically increasing in $s$.
        Let us define $μ := 𝔼[H(t) \mid S(t) = s]$. For the same reasons as in \itemRef{it:exp-H}  we have
        \[μ ≤ 2\tbinom{s}{2}\tfrac{1}{n} = 𝒪(m\e^{-2(1-γ)t}) = 𝒪(m\e^{-(2-2γ)t}) = o(Z).\]
        
        Assume $S(t) = s$ and let $b₁,…,b_s$ be the $s$ balls with assigned lifetimes exceeding $t$. The vertices $v₁,…,v_s ∈ [n]$ that these balls are assigned to are still independent and uniformly random variables.
        Like in \cref{lem:concentration-early} \itemRef{it:CH} we aim to use McDiarmid's inequality (\cref{lem:std-conc} \itemRef{it:mcdiarmid}). Again $H(t)$ can be computed from $v₁,…,v_s$ and changing a single $v_i$ can change $H(t)$ by at most $\pm 2$. We conclude:
        \begin{align*}
            \Pr[H(t) > 2Z &\mid S(t) = s]
            ≤ \Pr\big[|H(t) - μ| > Z \mid S(t) = s\big]
            \stackrel{\textrm{Mc.D.}}{=} 2\exp(-\tfrac{2Z²}{4s})\\
            &= \exp(-Ω(\tfrac{m²\e^{-2(2-2γ)t}}{m\e^{-(1-γ)t}}))
            = \exp(-Ω(m\e^{-3(1-γ)t}))\\
            &= \exp(-Ω(m^{1-(1-γ)}))
            = \exp(-Ω(m^{γ}))
        \end{align*}
        where the step between the second and third line used $t ≤ \frac{1}{3}\ln(m)$. Combining what we have just obtained with \cref{eq:boundingHt} implies the claim. \qedhere
        %
    \end{enumerate}
\end{proof}

We are now in a position to prove \cref{prop:ball-counts} \itemRef{it:few-heavy-balls}.
\begin{proof}[Proof of \cref{prop:ball-counts} \itemRef{it:few-heavy-balls}]
    Conditioned on the high probability events of \cref{lem:concentration-bounds} \itemRef{it:concen-B}, \itemRef{it:concen-H0} and \itemRef{it:concen-H} we have, for some constants $C, C' > 0$ (previously implicit in $𝒪$-notation)
    \begin{align*}
        &&B(t) &≥ Cm\e^{-(k/(k-1)+γ)t} && \text{ for $t ∈ [0,\tfrac{3}{5}\ln(m)]$},&&\\
        &&H(t) &= 0 && \text{ for $t ≥ \tfrac{3}{5}\ln(m)$},&&\\
        &&H(t) &≤ C'm\e^{-(2-2γ)t} && \text{ for $t ∈ [0,\tfrac{3}{5}\ln(m)]$}.&&
    \end{align*}
    Our task is to show $H(t) ≤ \frac{1}{2k}B(t)$ for all $t ≥ t₀$ where $t₀$ is some constant. When $H(t) = 0$ there is nothing to show, so we may assume $t ∈ [0,\frac{3}{5}\ln(m)]$ and $B(t) > 0$. By choosing $γ = 1/12$ we can bound the quotient
    \[ \tfrac{H(t)}{B(t)} ≤ \tfrac{C'}{C} \e^{-(2-2γ-\frac{k}{k-1}-γ)t} = \tfrac{C'}{C} \e^{-(\frac{k-2}{k-1}-3γ)t} ≤ \tfrac{C'}{C} \e^{-(\frac{1}{2}-\frac{3}{12})t} = \tfrac{C'}{C} \e^{-t/4}.\]
    By choosing $t₀ = t₀(k,C,C')$ large enough, we obtain $\frac{H(t)}{B(t)} ≤ \frac{1}{2k}$ for any $t ≥ t₀$ as desired.
\end{proof}




\section{Bounding Peeling Numbers}
\label{sec:bounding-pn}

Now all our previous arguments come together. Recall from \cref{sec:peeling-numbers} that the number of times a given key $x$ moves throughout all insertions is related to the peeling number of the corresponding hyperedge $e$, defined (roughly) as the number of hyperedges that $e$ depends on (directly or indirectly). When aggregating over all $e$, we may conversely count the number of hyperedges that \emph{depend on $e$}. We will show that in our model, where the hypergraph and its random peeling are generated in continuous time, the expected number of such hyperedges increases at most exponentially for a constant amount of time $t₀$. Due to the rareness of heavy balls after time $t₀$, the hyperedges generated after $t₀$ tend to be isolated from each other with only a small number of further dependencies between them. For the precise argument we introduce some new concepts.


\def\Gdep{G_{\mathrm{dep}}}
\def\GdepV{G_{\mathrm{dep}}^V}
\paragraph{The Dependence Graph.}
Recall the definition of direct dependence and peeling numbers from \cref{sec:peeling-numbers}, defined in the context of a peelable hypergraph $H$ and a peeling $F$. Let $\Gdep = (E, \{(e,e') \mid e ∈ E, e' ∈ D_F(e)\})$ be the \emph{dependence graph}. More precisely, each $(e,e')$ with $e ≠ e'$ is contained in the directed \emph{multi}graph $\Gdep$ with the same multiplicity with which $F(e)$ is contained in the \emph{multi}set $e'$. Then $\peel(e)$ is simply the number of non-trivial paths in $\Gdep$ that start at $e$, and $\sum_{e ∈ E}\peel(e)$ is the total number of non-trivial paths in $\Gdep$. We make two modifications before finally bounding this number.

\paragraph{A Vertex-Based Version of the Dependence Graph.} For technical reasons\footnote{We later wish to interact with a vertex $v ∈ V$ without revealing the hyperedge $e$ with $F(e) = v$ (if such a hyperedge exists at all).} we will work with the \emph{vertex-based dependence graph} $\GdepV$ with vertex set $V$. For each $e' ∈ E$ with $F(e') = v'$ each edge $(v,v')$ for $v ≠ v'$ exists with the multiplicity with which $e'$ contains $v$.
While an edge $(e,e')$ in $\Gdep$ indicated “$e'$ must be peeled before $e$” the interpretation of an edge $(v,v')$ in $\GdepV$ is the slightly more subtle “the hyperedge $e'$ with $F(e') = v'$ must be peeled before the hyperedge $e$ with $F(e) = v$ \emph{if such a hyperedge $e$ exists}”.
Any $v ∉ F(E)$ has no incoming edges in $\GdepV$, but may have outgoing edges. It is easy to see that $F$ is a graph homomorphism (though not necessarily a graph isomorphism) from $\Gdep$ to $\GdepV$, i.e.\ the multiplicity of the edge $(e,e')$ in $\Gdep$ equals the multiplicity of the edge $(F(e),F(e'))$ in $\Gdep$. We can define $\peel(v)$ for $v ∈ V$ as the number of non-trivial paths in $\GdepV$ that start at $v$. Then clearly $\peel(e) = \peel(F(e))$ for all $e ∈ E$.

\paragraph{Reversing Dependence.} When counting all paths it does not matter if we group them by start point or end point. Let $p(v)$ be the number of paths in $\GdepV$ that \emph{end} in $v$. Then
\begin{equation}
  \sum_{e∈E}\peel(e) = \sum_{e∈E}\peel(F(e)) ≤ \sum_{v ∈ V} \peel(v) = \sum_{v ∈ V} p(v) - n.\label{eq:pn-to-paths}
\end{equation}
The “$-n$” is due to the inclusion of the trivial path $(v)$ in $p(v)$.

\def\vend{v_{\mathrm{end}}}

\def\cnt{{\scalebox{0.8}{COUNT}}\xspace} 
\def\cntp{{\scalebox{0.8}{COUNT}}${}'$\xspace}

\paragraph{Counting Paths.} The remaining challenge is to bound $𝔼[p(v)]$ for $v ∈ V$. We do so by analysing an algorithm that computes $p(v)$. \Cref{algo:count-paths} (left) shows a generic algorithm \cnt for counting all paths ending in a vertex $\vend ∈ V$ for a given directed acyclic graph $G = (V,E)$ (it should be obviously correct). On the right we give an adaptation \cntp to the special case of $\GdepV$, where $\GdepV$ is implicitly given by a peelable hypergraph $H$ and a peeling $F$ of $H$.
Two remarks are in order.
\begin{figure}[tbp]
  \begin{minipage}{0.57\textwidth}
    \begin{algorithm}[H]
      \algo{\normalfont \cnt{}($G, \vend$)}{
        \For{$v ∈ V$}{
          $p[v] ← 0$
        }
        $p[\vend] ← 1$\;
        \For{$v' ∈ V$ in reverse topological order}{
          \For{$(v,v') ∈ E$}{
            $p[v] ← p[v] + p[v']$
          }
        }
        \Return $\sum_{v ∈ V}p[v]$
      }
    \end{algorithm}
  \end{minipage}
  \begin{minipage}{0.50\textwidth}
    \begin{algorithm}[H]
      \algo{\normalfont \cntp{}($H, F, \vend$)}{
        \For{$v ∈ V$}{
          $p[v] ← 0$
        }
        $p[\vend] ← 1$\;
        \For{$e' ∈ E$ in peeling order}{
          $v' ← F(e')$\;
          \For{$v ∈ e' \setminus \{v'\}$}{
            $p[v] ← p[v] + p[v']$
          }
        }
        \Return $\sum_{v ∈ V}p[v]$
      }
    \end{algorithm}
  \end{minipage}
  \caption[fragile]{On the left: A simple algorithm to compute the number of paths that terminate in a vertex $\vend$ of a directed acyclic graph $G = (V,E)$ using dynamic programming.\\
  On the right: The same algorithm, adapted for $\GdepV$, which is implicitly given by $H = (V,E)$ and $F$.}
  \label{algo:count-paths}
\end{figure}
\begin{description}
  •[\cntp is correct.] The order in which the hyperedges are peeled is a reverse topological ordering of $\Gdep$. We apply the graph homomorphism $F$ to obtain a topological ordering of $\GdepV$, \emph{except} for $V \setminus F(E)$. However, these vertices have indegree $0$ which means two things: They may come last in the reverse topological ordering and the missing iterations of the loop that would be performed by \cnt actually do nothing and may be omitted.
  •[\cntp works in the configuration model.] Executing the $i$-th loop iteration of \cntp need not access any information about $H$ and $F$ other than what the $i$-th round of peeling reveals about $H$ and $F$ anyway.
\end{description}

\noindent
We let \cntp run on $H = H_{n,cn,k}$ for $c = \cpk - ε$, a random peeling $F$ of $H$ and an arbitrary $\vend ∈ V$.
Let $t₀$ be the constant from \cref{prop:ball-counts} \itemRef{it:few-heavy-balls} and $δ = δ(ε,t₀)$ the constant from \cref{prop:ball-counts} \itemRef{it:many-light-balls}. Moreover, let $\mathcal{E} = \mathcal{E}(ε,t₀,δ)$ be the intersection of the corresponding high probability events. We define
\[ T_i := 𝟙[\mathcal{E}]·\sum_{v ∈ V} p_i[v] \]
where $p_i[v]$ is the value of $p[v]$ after the $i$-th iteration of the loop of \cntp for $0 ≤ i ≤ m$, if it exists. The term “$𝟙[\mathcal{E}]$” makes sure that $T_i$ has a well-defined value of $0$ in the unlikely event that $H$ turns out not to be peelable, but without distorting the probability space that conditioning on $\mathcal{E}$ would bring.
We will now apply our results on peeling in the configuration model in continuous time from \cref{sec:peeling-in-config-model}.
\begin{lemma}
  \label{lem:bounding-paths}
    Let $I ∈ [m]$ be the index of the last round of peeling beginning before time $t₀$. 
    \begin{enumerate}[(i)]
      • $𝔼[T_I] ≤ \e^{(k-1)/δ}$.
      • $𝔼[T_m \mid I, (T_j)_{j ≤ I}] ≤ 2kT_I$.
      • $𝔼[p(\vend) \mid \mathcal{E}] ≤ 2k\e^{(k-1)/δ}·(1+o(1))$.
    \end{enumerate}
\end{lemma}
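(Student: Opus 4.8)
The plan is to analyse the dynamic program \cntp round by round, tracking the quantity $T_i = 𝟙[\mathcal{E}]\sum_{v} p_i[v]$ and using the two regimes supplied by \cref{prop:ball-counts}. For part (i), I would examine how $T_i$ changes in a single round $i+1$ of peeling that begins before time $t₀$. In that round a vertex $v'$ of current degree~$1$ is chosen uniformly at random among the $L(t)$ light balls present (which is $≥ δn$ on $\mathcal{E}$ by \cref{prop:ball-counts}\itemRef{it:many-light-balls}), a hyperedge $e'$ with $F(e') = v'$ is revealed, and the update adds $p[v']$ to each of the at most $k-1$ other entries $p[v]$ with $v ∈ e'\setminus\{v'\}$. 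Thus $T_{i+1} - T_i ≤ (k-1)\,p_i[v']$, and crucially $p_i[v'] ≤ \max_{v}p_i[v]$; but since $v'$ is chosen uniformly among $≥δn$ light vertices, and these light vertices each still have $p_i[\cdot]$ unchanged since they have not yet been touched (a light ball's vertex has indegree $0$ in the part of $\GdepV$ revealed so far), we get $𝔼[p_i[v'] \mid \text{state}] ≤ \frac{1}{δn}\sum_{v}p_i[v] = \frac{T_i}{δn\,𝟙[\mathcal E]}$ — more carefully, $𝔼[T_{i+1}\mid \text{state}] ≤ (1 + \tfrac{k-1}{δn})T_i$. Iterating over the $I ≤ m = cn$ early rounds gives $𝔼[T_I] ≤ (1+\tfrac{k-1}{δn})^{cn} ≤ \e^{c(k-1)/δ} ≤ \e^{(k-1)/δ}$ (using $c < 1$).

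For part (ii), I would condition on the configuration at time $t₀$ (equivalently on $I$ and $(T_j)_{j\le I}$) and bound the total further increase of $T$ during the late rounds $I+1,\dots,m$. Here \cref{prop:ball-counts}\itemRef{it:few-heavy-balls} says that for all $t ≥ t₀$ at most a $\frac{1}{2k}$-fraction of remaining balls are heavy. The key point is that if the vertex $v'$ peeled in a late round has its ball as the \emph{only} ball at $v'$ and moreover all $k-1$ freshly-paired balls lie at vertices that currently have $p[\cdot] = 0$ (i.e.\ vertices never previously selected and carrying no path-count), then the update does nothing; more robustly, the expected number of "effective" updates — those adding a nonzero amount — is controlled by the number of heavy balls encountered, which is a $\le \frac1{2k}$-fraction. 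Summing the at most $k-1$ increments of size $\le \max_v p[v] \le T$ per effective round against the $\le \frac1{2k}$ chance of hitting a path-carrying vertex, one gets that the \emph{total} expected increment over all late rounds is at most $(k-1)\cdot\frac{1}{2k}\cdot(\text{current }T)$ accumulated, hence $𝔼[T_m \mid I,(T_j)_{j\le I}] \le (1 + \tfrac{k-1}{2k})^{?}\cdots$; I would instead phrase it as a supermartingale-type bound giving $𝔼[T_m\mid\cdots] ≤ 2k\,T_I$ directly, the factor $2k$ absorbing the geometric series $\sum (k-1)/(2k)^j$-style contribution. The honest statement is that "less than one additional path is created in future rounds per existing path," exactly as the technical overview promises, and the bookkeeping turns this into the clean factor $2k$.

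Part (iii) is then a short assembly: by the tower rule, $𝔼[T_m] = 𝔼\big[𝔼[T_m \mid I,(T_j)_{j\le I}]\big] ≤ 2k\,𝔼[T_I] ≤ 2k\,\e^{(k-1)/δ}$ using (i) and (ii). Since $p(\vend) = \sum_v p_m[v]$ and $T_m = 𝟙[\mathcal E]\,p(\vend)$, and $\mathcal E$ holds whp, we have $𝔼[p(\vend)\mid\mathcal E] = 𝔼[T_m]/\Pr[\mathcal E] ≤ 2k\,\e^{(k-1)/δ}\cdot(1+o(1))$, where $(1+o(1)) = 1/\Pr[\mathcal E]$ since $\Pr[\mathcal E] = 1 - n^{-Ω(1)}$.

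The main obstacle I expect is part (ii): making precise the claim that late-round updates "almost never" create new paths. The subtlety is that a heavy ball encountered in a late round may be attached to a vertex $v$ that \emph{does} already carry a large $p[v]$, and one must argue that the expected number of such events — weighted by the path-count they propagate — is genuinely a sub-unit fraction of the existing total $T$, uniformly over the (adversarially revealed, via the peeling order) history. This requires carefully exploiting the configuration-model coupling so that the $k-1$ balls paired with the selected light ball are fresh uniform samples, independent of the $p[\cdot]$ values accumulated so far, together with the heavy-ball bound $H(t) ≤ \frac{1}{2k}B(t)$ to control the probability that any one of them lands on a path-carrying (necessarily previously-peeled, hence vertex-with-nonzero-$p$) vertex. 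Getting the constant to come out as exactly $2k$ rather than something messier is a matter of choosing the right potential/supermartingale; the conceptual content is entirely in \cref{prop:ball-counts}.
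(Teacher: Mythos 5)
Your parts (i) and (iii) match the paper's argument (for (i), note that your parenthetical justification that light vertices have untouched $p$-values is both false and unnecessary: a degree-one vertex may well have received contributions from previously revealed hyperedges, but the bound only needs nonnegativity of the $p$-values and the uniform choice among at least $\delta n$ light vertices). The genuine gap is in part (ii), which you yourself flag but then attack with the wrong mechanism. In a late round the increment to $T$ is $(k-1)\,p_{i-1}[v_0]$, where $v_0$ is the \emph{selected light vertex}; it does not depend at all on the states of the $k-1$ paired vertices, so "effective updates" are not the rounds in which a paired ball hits a path-carrying vertex. Moreover, the heavy-ball bound $H(t)\le\frac{1}{2k}B(t)$ does not control the probability that a paired ball lands on a vertex with nonzero $p[\cdot]$: a vertex can carry a large path count while holding only one remaining ball (it is then light), and late in the process a large fraction of remaining balls may sit at such vertices. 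What the heavy-ball bound does control is whether a \emph{receiving} vertex stays alive after losing the paired ball, i.e.\ whether the weight just deposited there can ever be propagated again in a future round. Your sketch conflates "carries paths" with "heavy/previously peeled" and therefore bounds the wrong event.

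Because of this, your bookkeeping cannot close: a per-round bound phrased in terms of the current total $T_i$ (a factor like $1+\frac{k-1}{2k}$ per round, or any fixed per-round multiplicative loss) compounds over the $m-I=\Theta(n)$ late rounds and does not yield the constant $2k$. The missing idea in the paper is an auxiliary quantity: the weight restricted to \emph{alive} vertices, $A_i=\mathds{1}[\mathcal{E}]\sum_{v\in V_A^{(i)}}p_i[v]\le T_i$. In a late round with $w=p_{i-1}[v_0]$ one has $T_i=T_{i-1}+(k-1)w$ while $A_i\le A_{i-1}-w+\sum_{j=1}^{k-1}\xi_j w$, where $\xi_j$ indicates that the $j$-th paired ball is heavy, and \cref{prop:ball-counts} \itemRef{it:few-heavy-balls} gives $\Pr[\xi_j=1]\le\frac{1}{2k}$ uniformly in the history. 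This makes $T_i+(2k-1)A_i$ a supermartingale, and backwards induction from $A_m=0$ yields $\mathbb{E}[T_m\mid I,(T_j)_{j\le I}]\le T_I+(2k-1)A_I\le 2kT_I$. Only weight sitting on alive vertices can ever be propagated, and that is what shrinks; without tracking $A$ (or an equivalent potential) the claimed factor $2k$ does not follow from the heavy-ball estimate.
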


\begin{proof}
  \begin{enumerate}[(i)]
    • Let $i ≤ I$ be the index of a round of peeling beginning at an “early” time $t < t₀$. If $L(t) < δn$ then the high probability event in \cref{prop:ball-counts} \itemRef{it:many-light-balls} fails, which gives $T_I = 0$. Hence, assume $L(t) ≥ δn$.

    At the start of round $i$, a random vertex $v'$ of degree $1$ is picked from the set $V₁^{(i)}$ of $|V₁^{(i)}| = L(t)+1 ≥ δn$ vertices that have degree $1$. In \cntp its current value $p_{i-1}[v']$ is then added to $k-1$ other vertices that form a hyperedge with $v'$. We compute, using that the expectation of $p_{i-1}[v']$ is the mean of $p_{i-1}[v]$ for $v ∈ V₁^{(i)}$:
    \begin{align}
      𝔼[T_i \mid T_{i-1}, i ≤ I] &= T_{i-1} + 𝔼[(k-1)·p_{i-1}[v'] \mid T_{i-1}, i ≤ I]\notag\\
      & = T_{i-1} + (k-1)𝔼\Big[\tfrac{1}{|V₁^{(i)}|}\sum{}_{v ∈ V₁^{(i)}} p_{i-1}[v] \mid T_{i-1}, i ≤ I\Big]\notag\\
      & ≤ T_{i-1} + (k-1)𝔼[\tfrac{1}{δn} T_{i-1} \mid T_{i-1}, i ≤ I] = T_{i-1}·(1+\tfrac{k-1}{δn}).\label{eq:T_i-from-T_i-1-early}
    \end{align}
    Let now $T_i' := T_{\min(i,I)}$ for $i ∈ [m]$.
    Conditioned on $i ≤ I$ we have $T_i' = T_i$ and $T_{i-1}' = T_{i-1}$ so $𝔼[T_i' \mid T_{i-1}', i ≤ I] ≤ T_{i-1}'·(1+\tfrac{k-1}{δn})$ by \cref{eq:T_i-from-T_i-1-early}. Moreover, conditioned on $i > I$ we have $T_i' = T_{i-1}'$ and hence trivially $𝔼[T_i' \mid T_{i-1}', i > I] = T_{i-1}' ≤ T_{i-1}'·(1+\tfrac{k-1}{δn})$. The combination gives $𝔼[T_i' \mid T_{i-1}'] ≤ T_{i-1}'·(1+\tfrac{k-1}{δn})$.
    By induction we obtain
    \begin{align*}
      𝔼[T_i'] &= 𝔼\big[𝔼[T_i' \mid T_{i-1}']\big] ≤ 𝔼[T_{i-1}'·(1+\tfrac{k-1}{δn})]
      = (1+\tfrac{k-1}{δn})·𝔼[T_{i-1}'] \reasonrel{Ind.}{≤} (1+\tfrac{k-1}{δn})^{i},
    \end{align*}
     for all $i ∈ [m]$ where the base case $T₀' = T₀ = 1$ is ensured by initialisation. Using $1+x ≤ \e^x$ we get the desired bound
    \[𝔼[T_I] = 𝔼[T_m'] ≤ (1+\tfrac{k-1}{δn})^{m} ≤ \e^{\frac{k-1}{δ}c} ≤ \e^{(k-1)/δ}.\]
    • Let us call a vertex $v$ \emph{alive} as long as at least one ball remains at $v$ (i.e.\ $v$ is incident to a hyperedge that is yet to be revealed). Let $V_A^{(i)}$ be the set of living vertices after round $i$ and $A_i = 𝟙[\mathcal{E}]·\sum_{v ∈ V_A^{(i)}} p_i[v] ≤ T_i$. We will prove that for all $I ≤ i ≤ m$
    \begin{equation}
      𝔼[T_m \mid I, (T_j)_{j ≤ i}, (A_j)_{j ≤ i}] ≤ T_i + (2k-1)A_i \label{eq:Tm-From-Ti}      
    \end{equation}
    which implies the claim when plugging in $i = I$ and the worst case $A_I = T_I$.

    Let $i > I$ be the index of a round beginning at time $t ≥ t₀$. The peeling process first selects a random light ball $b₀$ at some vertex $v₀$ for removal, then selects $k-1$ random balls $b₁,…,b_{k-1}$ at vertices $v₁,…,v_{k-1}$ and emits the hyperedge $\{v₀,v₁,…,v_{k-1}\}$. Let $ξ₁,…,ξ_{k-1} ∈ \{0,1\}$ indicate whether $b₁,…,b_{k-1}$ are heavy at time of selection. For $w := p_{i-1}[v₀]$ the effects of round $i$ can be described as
    \begin{align}
      T_i = T_{i-1} + (k-1)w && \text{and} &&
      A_i ≤ A_{i-1} - w + \sum_{i = 1}^{k-1}ξ_iw \label{eq:next-Ti-Ai}
    \end{align}
    where in the formula for $A_i$ the term $-w$ accounts for $v₀$ itself dying and $ξ_i·w$ accounts for the added weight at vertex $v_i$, which only remains alive if the ball removed from it was not the last one, i.e.\ if $b_i$ was heavy. Due to the inclusion of the high probability events from \cref{prop:ball-counts} in $\mathcal{E}$, we may assume $H(t) ≤ \frac{1}{2k}B(t)$ for all $t ≥ t₀$. This ensures that $\Pr[ξ_i = 1] ≤ \frac{1}{2k}$ for all $i ∈ [k-1]$ and regardless of any events prior to round $i$. Therefore, for any $i > I$
    \begin{align*}
      𝔼&[T_i+(2k-1)A_i \mid I, (T_j)_{j<i}, (A_j)_{j<i},w]\\
      &\eqRefRel{eq:next-Ti-Ai}{}{≤} 𝔼\big[T_{i-1}+(k-1)w+(2k-1)(A_{i-1}-w + \sum_{i = 1}^{k-1}ξ_iw)  \mid I, (T_j)_{j<i}, (A_j)_{j<i}, w \big]\\
      &≤ T_{i-1}+(2k-1)A_{i-1}+w·\Big((2k-1)·𝔼\Big[\sum_{i = 1}^{k-1}ξ_i  \mid I, (T_j)_{j<i}, (A_j)_{j<i}, w \Big] -k \Big)\\
      &≤ T_{i-1}+(2k-1)A_{i-1}+w·\Big((2k-1)·\tfrac{k-1}{2k} -k \Big) ≤ T_{i-1}+(2k-1)A_{i-1}.
    \end{align*}
    Since the upper bound does not depend on $w$ we conclude
    \begin{equation}
      𝔼[T_i+(2k-1)A_i \mid I, (T_j)_{j<i}, (A_j)_{j<i}] ≤ T_{i-1}+(2k-1)A_{i-1}.\label{eq:T_i-from-T_i-1}
    \end{equation}
    We finally obtain \cref{eq:Tm-From-Ti} via induction working backwards from the trivial base case $i = m$ (where $T_i = T_m$ and $A_i = 0$).
    \begin{align*}
      𝔼[T_m \mid & I, (T_j)_{j<i}, (A_j)_{j<i}] ≤ 𝔼\big[𝔼[T_m \mid I, (T_j)_{j≤i}, (A_j)_{j≤i}] \mid I, (T_j)_{j<i}, (A_j)_{j<i}\big]\\
      & \reasonrel{Ind.}{≤}
      𝔼[T_i+(2k-1)A_i \mid I, (T_j)_{j<i}, (A_j)_{j<i}]
      \eqRefRel{eq:T_i-from-T_i-1}{}{≤} T_{i-1}+(2k-1)A_{i-1}.
    \end{align*}
    • We simply put together \lipicsLabel{(i)} and \lipicsLabel{(ii)}:
    \begin{align*}
      𝔼[p(\vend) \mid \mathcal{E}] & = \frac{𝔼[𝟙[\mathcal{E}]·p(\vend)]}{\Pr[\mathcal{E}]} = (1+o(1))𝔼[T_m]
      = (1+o(1))𝔼\big[𝔼[T_m \mid I, (T_j)_{j ≤ I}]\big]\\
      &\reasonrel{\lipicsLabel{(ii)}}{≤}
       (1+o(1))𝔼[2kT_I]
      \reasonrel{\lipicsLabel{(i)}}{≤}
      (1+o(1))2k\e^{(k-1)/δ}.\qedhere
    \end{align*}
  \end{enumerate}
\end{proof}

\begin{proof}[Proof of \cref{prop:peeling-numbers-small}]
  The fact that $H$ is peelable follows from our choice for $\mathcal{E}$. The argument is completed by letting each vertex take the role of $\vend$ in \cref{lem:bounding-paths}:
  \begin{align*}
    𝔼\Big[\sum_{e ∈ E} \peel_F(e) \mid \mathcal{E}\Big] &\eqRefRel{eq:pn-to-paths}{Eq. }{≤} 𝔼\Big[\sum_{v ∈ V} p(v) \mid \mathcal{E}\Big]
    ≤ \sum_{v ∈ V} 𝔼[p(v) \mid \mathcal{E}]\\
    & \refrel{lem:bounding-paths}{Lem.}{≤} |V|·2k\e^{(k-1)/δ}·(1+o(1)) = 𝒪(n).\qedhere
  \end{align*}  
\end{proof}

\fi 

\section{Conclusion and Future Work}

This paper proves $𝒪(1)$ expected amortised running times for random walk insertions into cuckoo hash tables and is the first to yield meaningful results for small values of $k$ such as $k = 3$.
Our proof strategy is to link the number of times that a key $x$ moves to the number of times that certain other keys move, where these other keys all preceed $x$ in the peeling process. The main technical challenge
\iffullversion\else
 (addressed in the full version of this paper)
\fi
 was to extend an existing analysis of this peeling process in order to obtain stronger guarantees on its late stages when a sublinear number of keys remain.

To make the result more relevant to practitioners, it is natural to pursue a generalisation to sequences of insertions \emph{and deletions} and to buckets of size $ℓ ≥ 2$. The author suspects that the given argument can be correspondingly extended with moderate technical complications.


The more important goal, however, is to obtain a result that works up to the load threshold (for all $c < \ck-ε$), not just up to the peeling threshold (for $c < \cpk-ε$). There is at least one reason for optimism, namely the recent discovery of a variant of cuckoo hashing that raises the peeling threshold to the load threshold \cite{W:SpatialCoupling:2021}. Briefly, a key's $k$ hashes are randomly distributed in a random window of $γn$ consecutive buckets. The peeling threshold of this variant is equal to $\ckl - ε$ where $ε(γ)$ can be made arbitrarily small. However, when using this variant, an analysis can no longer rely on the configuration model due to a lack of symmetry between the vertices, meaning that even if the general idea is still sound, the proof would have to use different methods.

Regardless of whether such an improvement is achievable, we believe this paper to be a promising step forward in the ongoing project of retrofitting the widespread use of cuckoo hash tables and cuckoo filters with strong theoretical guarantees.

\ifanonymous\else
  \paragraph{Acknowledgements.} I would like to thank Martin Dietzfelbinger for providing several useful comments that helped with improving the presentation of this paper as well as an anonymous reviewer who gave useful feedback regarding the technical argument.
\fi

\bibliographystyle{plainurl}
\bibliography{bibliographie}

\iffullversion\else
  \nolinenumbers
  \pagestyle{empty}
  \newpage
  \ 
  
  \vfill
    {\centering\sffamily
      {\Huge\bfseries – Full Version –\\}
      {\Huge (includes Sections 4 + 5)\\}
    }
  \vfill
  \foreach\i in {1,...,21}{
    \newpage
    \begin{tikzpicture}[overlay]
      \pgfmathsetmacro{\odd}{int(Mod(\i,2))}
      \ifnum\odd=1
        \node[shift={(19.5em,-31.3em)}] {\includegraphics[page=\i]{fullversion.pdf}};    
      \else
        \node[shift={(17.5em,-31.3em)}] {\includegraphics[page=\i]{fullversion.pdf}};    
      \fi
    \end{tikzpicture}
  }
  
\fi



\end{document}